\theoremstyle{thmstyleone}
\newtheorem{theorem}{Theorem}
\newtheorem{proposition}[theorem]{Proposition}
\theoremstyle{thmstyletwo}
\theoremstyle{thmstylethree}
\newtheorem{definition}{Definition}
\begin{document}

\title[Sampling via Rejection-Free Partial Neighbor Search]{Sampling via Rejection-Free Partial Neighbor Search}

\author[1]{\fnm{Sigeng} \sur{Chen}}\email{sigeng.chen@mail.utoronto.ca}

\author[1]{\fnm{Jeffrey} \spfx{S.} \sur{Rosenthal}}\email{jeff@math.toronto.edu}

\author[2]{\fnm{Aki} \sur{Dote}}\email{dote.aki@fujitsu.com}

\author[3]{\fnm{Hirotaka} \sur{Tamura}}\email{tamura.hirotaka@dxrlab.com}

\author[4]{\fnm{Ali} \sur{Sheikholeslami}}\email{ali@ece.utoronto.ca}

\affil[1]{\orgdiv{Department of Statistical Sciences}, \orgname{University of Toronto}, \orgaddress{\street{700 University Avenue}, \city{Toronto}, \postcode{M5G 1Z5}, \state{Ontario}, \country{Canada}}}

\affil[2]{\orgname{Fujitsu Ltd.}, \orgaddress{\street{4-1-1 Kamikodanaka Nakahara-ku}, \city{Kawasaki}, \postcode{211-8588}, \state{Kanagawa}, \country{Japan}}}

\affil[3]{\orgname{DXR Laboratory Inc.}, \orgaddress{\street{ 4-38-10 Takata-Nishi, Kohoku-ku}, \city{Yokohama}, \postcode{223-0066}, \state{Kanagawa}, \country{Japan}}}

\affil[4]{\orgdiv{Department of Electrical and Computer Engineering}, \orgname{University of Toronto}, \orgaddress{\street{10 King's College Road}, \city{Toronto}, \postcode{M5S 3G4}, \state{Ontario}, \country{Canada}}}

\date{\today}
  
\abstract{The Metropolis algorithm \citep{metropolis1953equation, hastings1970monte} involves producing a Markov chain to converge to a specified target density $\pi$. In order to improve its efficiency, we can use the Rejection-Free \citep{rosenthal2021jump} version of the Metropolis algorithm, which avoids the inefficiency of rejections by evaluating all neighbors. Rejection-Free can be made more efficient through the use of parallelism hardware. However, for some specialized hardware, such as Digital Annealing Unit \citep{9045100}, the number of units will limit the number of neighbors being considered at each step. Hence, we propose an enhanced version of Rejection-Free known as Partial Neighbor Search, which only considers a portion of the neighbors while using the Rejection-Free technique. This method will be tested on several examples to demonstrate its effectiveness and advantages under different circumstances.}

\keywords{Metropolis Algorithm, Rejection-Free, Partial Neighbor Search, Unbiased PNS, QUBO}

\maketitle

\section{Introduction}
\label{sec-introduction}

The Monte Carlo method involves the deliberate use of random numbers in a calculation with the structure of a stochastic process \citep{kalos2009monte}. Monte Carlo techniques are based on repeating experiments sufficiently many times to obtain many quantities of interest using the Law of Large Numbers and other statistical inference methods \citep{kroese2014monte}. The three main applications of Monte Carlo methods are optimization, numerical integration, and sampling \citep{kroese2014monte}. This paper focuses on the Markov chain Monte Carlo method for sampling.

The Markov chain Monte Carlo method (MCMC) simulates observations from a target distribution to obtain a chain of states that eventually converges to the target distribution itself. The Metropolis algorithm \citep{metropolis1953equation, hastings1970monte}, an MCMC method, is one of the most popular techniques among its kind \citep{hitchcock2003history}. The Metropolis algorithm produces a Markov chain $\{X_0, X_1, X_2, \dots\}$ on the state space $\mathcal{S}$ and target density function $\pi$, as follows: given the current state $x_k$, the Metropolis algorithm first proposes a new state $y$ from a symmetric proposal distribution $\mathcal{Q}(x_k, \cdot)$; it then accepts the new state (i.e., sets $x_k = y$) with probability $\min(1, \frac{\pi(y)}{\pi(x_k)})$; otherwise, it rejects the proposal (i.e., sets $x_{x + 1} = x_{k}$). This simple algorithm ensures that the Markov chain has $\pi$ as a stationary distribution.

However, the Metropolis algorithm may suffer from the inefficiency of rejections. We have a probability of $\Big[1 - \min(1, \frac{\pi(y)}{\pi(x_k)})\Big]$ to remain at the current state, even though we have spent time proposing a state, computing a ratio of target probabilities, generating a random variable, and deciding not to accept the proposal. Therefore, we proposed the Rejection-Free algorithm in \cite{rosenthal2021jump} to improve the Metropolis algorithm's performance. Furthermore, the parallelism of computer hardware can significantly increase the efficiency of Rejection-Free. The use of parallelism in Rejection-Free combined with simple techniques such as parallel tempering can yield 100x to 10,000x speedups \citep{sheikholeslami2021power}. However, there is a limit to the number of parallel tasks that can be executed simultaneously on most specialized parallelism hardware. Accordingly, Rejection-Free has a ceiling on the number of neighbors that can be evaluated at each step. Consequently, we present an enhanced version of Rejection-Free called Partial Neighbor Search (PNS), which only considers part of the neighbors when applying the Rejection-Free technique, whereas the Rejection-Free technique means considering all selected neighbors and calculating the next state when ignoring any immediately repeated states. PNS has also been developed to solve optimization problems, and it outperforms the Simulated Annealing and Rejection-Free algorithms in many optimization problems, including the QUBO, Knapsack, and 3R3XOR problems; see \cite{chen2022optimization} for further information.

We next review the Metropolis-Hastings algorithm and Rejection-Free algorithm in more detail. Then, in Section~\ref{sec-basicpns}, we introduce our Basic Partial Neighbor Search (Basic PNS) sampling algorithm, which considers subsets of neighbor states for possible moves and calculates the multiplicity list directly from the subsets. Unfortunately, this version of the Markov chain does not converge to the target density. In Section~\ref{sec-unbiasedpns}, we introduce our unbiased version of Partial Neighbor Search (Unbiased PNS), where the sampling distribution will converge to the target density correctly; see Appendix~\ref{appendixA} for the proof. Unlike Rejection-Free, Unbiased PNS can always use the advantage of the parallelism hardware to improve the sampling efficiency, no matter the dimension of the problem. We apply the Unbiased PNS to the QUBO question to illustrate its performance in Section~\ref{sec-qubo}. In addition, we discuss the choice of subsets of the Unbiased PNS for the QUBO question in Section~\ref{sec-chooseparameters}. We further illustrate that we can apply the Unbiased PNS to continuous models in Section~\ref{sec-continuous}. We compare the Metropolis algorithm and Unbiased PNS in a continuous example called the Donuts example to demonstrate the performance of Unbiased PNS in Section~\ref{sec-donuts}. Furthermore, in our optimization paper \citep{chen2022optimization}, the performance of PNS in optimization questions is much better than Rejection-Free and Simulated Annealing. Thus we adapt the Optimization PNS and use it as the burn-in part for sampling in Section~\ref{sec-burninpns}. \cite{geyer2011introduction} stated that burn-in until converging to stationarity is not necessary for MCMC. If we take \citeauthor{geyer2011introduction}'s (\citeyear{geyer2011introduction}) argument, then we can use Optimization PNS to replace the burn-in. On the other hand, we can combine the Optimization PNS and the regular burn-in to get a better algorithm that will converge to stationarity faster. In Appendix \ref{appendixA}, we prove the convergence theorem of our Unbiased PNS algorithm. In addition, in Appendix \ref{appendixC}, we show how to sample proportionally and efficiently on parallelism hardware. Even when we apply the algorithm to a single core implementation, the technique also reduces the time of selecting the next state to some extent.

\subsection{Background on the Metropolis-Hastings algorithm}

Discrete sampling questions usually contain the following essential elements (adapted from the essential elements of Simulated Annealing in \cite{bertsimas1993simulated}):
\begin{enumerate}
    \item a state space $\mathcal{S}$;
    \item a real-valued target distribution $\pi: \mathcal{S} \to [0, 1]$ where $\sum_{x \in \mathcal{S}} \pi(x) = 1$;
    \item $\forall x \in \mathcal{S}$, $\exists$ a proposal distribution $\mathcal{Q}(x, \cdot)$ where $\sum_{y \in \mathcal{S} \backslash \{x\}} \mathcal{Q}(x, y) = 1$, and $\mathcal{Q}(x, y) > 0 \iff \mathcal{Q}(y, x) > 0$, $\forall x, y \in \mathcal{S}$;
    \item $\forall x \in \mathcal{S}$, $\exists$ a neighbor set $\mathcal{N}(x) = \{y \in \mathcal{S} \mid \mathcal{Q}(x, y) > 0\} \subset \mathcal{S} \backslash \{x\}$.
\end{enumerate}
For simplicity, we focus on the discrete cases here. We will talk more about the general state space in Appendix \ref{appendixA}. 

The Metropolis algorithm has been the most successful and influential of all the members of the Monte Carlo method \citep{beichl2000metropolis}. It is designed to generate a Markov chain that converges to a given target distribution $\pi$ on a state space $\mathcal{S}$. The Metropolis-Hastings(M-H) algorithm is a generalized version of the Metropolis algorithm, including the possibility of a non-symmetric proposal distribution $\mathcal{Q}$ \citep{hitchcock2003history}. The M-H algorithm is stated in Algorithm \ref{alg-metropolis}.

\begin{algorithm}
\caption{the Metropolis-Hastings algorithm}\label{alg-metropolis}
\begin{algorithmic}
\State initialize $X_0$
\For{$k$ in $1$ to $K$}
    \State random $Y \in \mathcal{N}(X_{k-1})$ based on $\mathcal{Q}(X_{k-1}, \cdot)$
    \State random $U_k \sim \text{Uniform}(0, 1)$
    \If{$U_k < \frac{\pi(Y) \mathcal{Q}(Y, X_{k-1})}{\pi(X_{k-1}) \mathcal{Q}(X_{k-1}, Y)}$}     
    \State \Comment{accept with probability $\min \Big{\{} 1,  \frac{\pi(Y) \mathcal{Q}(Y, X_{k-1})}{\pi(X_{k-1}) \mathcal{Q}(X_{k-1}, Y)} \Big{\}} $}
    \State $X_{k} \gets Y$ \Comment{accept and move to state $Y$}
    \Else \State $X_{k} \gets X_{k-1}$ \Comment{reject and stay at $X_{k-1}$}
    \EndIf 
\EndFor
\end{algorithmic}
\end{algorithm}

Algorithm \ref{alg-metropolis} ensures the Markov chain $\{X_0, X_1, X_2, \dots, X_K\}$ has $\pi$ as stationary distribution. It follows (assuming irreducibility) that the expected value $E_\pi(h)$ of a functional $h:S\to \mathbb{R}$ with respect to $\pi$ can be estimated by $\frac{1}{K} \sum_{i=1}^K h(X_i)$ for sufficiently large run length $K$. 

In Algorithm \ref{alg-metropolis}, if $U_k \ge \frac{\pi(Y) \mathcal{Q}(Y, X_{k-1})}{\pi(X_{k-1}) \mathcal{Q}(X_{k-1}, Y)}$, then we will remain at the current state, even though we have spent time in proposing a state, computing a ratio of target probabilities, generating a random variable $U_k$, and deciding not to accept the proposal. Such inefficiencies could happen frequently and are considered a necessary evil of the M-H algorithm. Thus, we proposed the Rejection-Free algorithm \citep{rosenthal2021jump} to improve the inefficiency caused by these rejections. 

\subsection{Background on Rejection-Free algorithm for sampling}
\label{subsec-rf}

Before introducing the Rejection-Free algorithm, we must first introduce the jump chain. Given a run $\{X_k\}$ of a Markov chain, we define the jump chain to be $\{J_k, M_k\}$, where $\{J_k\}$ represents the same chain as $\{X_k\}$ except omitting any immediately repeated states, and we use the multiplicity list $\{M_k\}$ to count the number of times the original chain remains at the same state.

For example, if the original chain is 
$$\{X_k\} = \{a, b, b, b, a, a, c, c, c, c, d, d, a, \dots\},$$
then the jump chain and the corresponding multiplicity list would be 
$$\{J_k\}= \{a, b, a, c, d, a, \dots\} \mbox{, } \{M_k\} = \{1, 3, 2, 4, 2, 1, \dots\}.$$

The jump chain itself is also a Markov chain. If we assume the transition probability of the original Markov chain $\{X_k\}$ generated by Algorithm \ref{alg-metropolis} is
\begin{equation}
\begin{aligned}
\label{equa-2}
    P[X_{k} = y \mid X_{k-1} = x] =  \mathcal{Q}(x, y) \min\Bigg\{1, \frac{\pi(y) \mathcal{Q}(y, x)}{\pi(x) \mathcal{Q}(x, y)}\Bigg\},
\end{aligned}
\end{equation} 
Then the transition probabilities $\hat{P}(y \mid x)$ for the jump chain $\{J_k, M_k\}$ is specified by 
\begin{equation}
\begin{aligned}
\label{equa-1}
    \hat{P}(J_k = x \mid J_{k - 1} = x) & = 0\mbox{, } \forall x \in \mathcal{S};  \\
    \\
    \hat{P}(J_k = y \mid J_{k - 1} = x) & = P(X_{k} = y \mid X_{k-1} = x, X_{k-1} \ne x) \\
    & = \frac{P(X_{k} = y \mid X_{k-1} = x)}{\sum_{z \ne x} P(X_{k} = z \mid X_{k-1} = x)} \mbox{, } \forall y \ne x.
\end{aligned}
\end{equation} 
Moreover, the conditional distribution of $\{M_k\}$ given $\{J_k\}$ is equal to the distribution of $1 + G$ where $G$ is a geometric random variable with success probability $1 - P(x \mid x) = \sum_{z \ne x}P(z \mid x)$; see \cite{rosenthal2021jump} for more details.

In addition, for the jump chain $\{J_k, M_k\}_{k=1}^K$, we call the total number of different states $K$ to be the jump sample size, and we call $\sum_{k=1}^K M_k$ to be the original sample size, which is the corresponding length of the original Markov chain. 

Given the above properties of the jump chain, the Rejection-Free algorithm is a sampling method that produces the jump chain as described by Algorithm \ref{alg-rf}. Note that the Rejection-Free algorithm described here can only deal with the discrete cases with at most a finite number of neighbors for all states. We'll review the Rejection-Free for general state space in Section \ref{sec-continuous}.

\begin{algorithm}
\caption{Rejection-Free algorithm for discrete case}\label{alg-rf}
\begin{algorithmic}
\State initialize $J_0$
\For{$k$ in $1$ to $K$}
    \State choose the next jump chain State $J_{k} \in \mathcal{N}(J_{k-1})$ such that $$\hat{P}(J_{k} = y \mid J_{k-1}) \propto \mathcal{Q}(J_{k-1}, y)\min\Bigg\{1, \frac{\pi(y) \mathcal{Q}(y, J_{k-1})}{\pi(J_{k-1}) \mathcal{Q}(J_{k-1}, y)}\Bigg\}$$
    \State calculate multiplicity list $M_{k-1} \gets 1 + G$ where $G \sim \text{Geometric}(p)$ with 
    $$p = \sum_{z \in \mathcal{N}(J_{k-1})}\mathcal{Q}(J_{k-1}, z)\min\Bigg\{1, \frac{\pi(z)\mathcal{Q}(z, J_{k-1})}{\pi(J_{k-1})\mathcal{Q}(J_{k-1}, z)}\Bigg\}$$
\EndFor
\end{algorithmic}
\end{algorithm}
Note that, in Algorithm \ref{alg-rf}, when we need to pick our next state according to the given probabilities, we can use the technique shown in Appendix \ref{appendixC}, which is specially designed for parallelism hardware. In addition, even when the Rejection-Free is applied to a single core implementation, such a technique is still faster than other methods to sample proportionally.

Algorithm \ref{alg-rf} ensures (assuming irreducibility) that the expected value $E_\pi(h)$ of a functional $h: \mathcal{S}\to \mathbb{R}$ with respect to $\pi$ can be estimated by $\frac{\sum_{k=1}^K M_k \, h(J_k)}{\sum_{k=1}^K M_k}$ for sufficiently large run length $K$, while avoiding any rejections. Rejection-Free can lead to great speedup in examples where rejections frequently happen for the M-H algorithm \citep{rosenthal2021jump}.

\section{Basic Partial Neighbor Search algorithm} 
\label{sec-basicpns}

In Algorithm \ref{alg-rf}, we can do this algorithm with parallelism in computer hardware to produce more efficient samples. However, the number of tasks that can be computed simultaneously by the parallelism hardware is not unlimited, while the number of neighbors $\rvert \mathcal{N}(x) \lvert$ can be super large. How can we take full advantage of the Rejection-Free with limited parallel hardware?

Assume the number of neighbors in Rejection-Free is at most N. That is, for $\forall x \in S$, $\lvert \mathcal{N}(x) \rvert \le N$. In addition, assume the number of tasks that can be computed simultaneously by the parallelism hardware is M. If $M > N$, then we can compute the transition probability of the original chain simultaneously by the parallelism hardware, where the transition probability is
\begin{equation}
    P(J_{k} = y \mid J_{k-1}) \propto \mathcal{Q}(J_{k-1}, y)\min\Big\{1, \frac{\pi(y) \mathcal{Q}(y, J_{k-1})}{\pi(J_{k-1}) \mathcal{Q}(J_{k-1}, y)}\Big\}.
\end{equation}
Then the transition probability $\hat{P}$ (defined at Equation \ref{equa-1}) for the Rejection-Free algorithm as stated in Algorithm \ref{alg-rf} is propositional to $P$, $\forall y \ne J_{k-1}$. On the other hand, if $M \le N$, the simplest way to take advantage of parallelism hardware is to evenly distribute the calculation tasks of the transition probabilities to each unit. In this case, each unit of parallelism hardware needs to calculate the probabilities for either $\lfloor \frac{N}{M} \rfloor$ (the floor function) or $\lceil \frac{N}{M} \rceil$ (the ceiling function) times, and then we can put the information from all these parts together for the next step of the algorithm. This method works for processors designed for general purposes, such as Intel and AMD cores. However, these chips are not specially designed for parallel computing, and off-chip communication significantly slows down the transfer rate of data to and from the cores \citep{sodan2010parallelism}. Therefore, using Intel and AMD cores as parallelism hardware is applicable but not ideal.

Moreover, several parallelization hardware specialized for parallel MCMC trials has been proposed. For example, the second generation of Fujitsu Digital Annealer uses a dedicated processor called a Digital Annealing Unit (DAU) \citep{9045100} to achieve high speed. This dedicated processor is designed to minimize communication overhead in arithmetic circuitry and with memory. In addition, the dedicated processor provides a virtually Rejection-Free process, resulting in a throughput that is orders of magnitude faster than that of a general-purpose processor. The problem with this Fujitsu chip is that it is rigidly constrained by on-chip memory capacity relative to the problem size $M$ that can be processed in parallel. For problem sizes $N > M$, it is impossible to compute transition probabilities for all neighborhoods to achieve Rejection-Free or similar parallel trials. The number of neighbors considered in each step must be limited to be within the on-chip memory capacity.

Initially, we want to adapt our Optimization Partial Neighbor Search (Optimization PNS) algorithm from \cite{chen2022optimization} to the sampling question here. Intuitively, we can use the Optimization PNS and add a step for calculating the multiplicity list. The Basic Partial Neighbor Search algorithm (Basic PNS) is shown in Algorithm \ref{alg-basicpns}. Again, we focus on discrete cases with at most a finite number of neighbors here. We will talk about PNS for general state space in Section \ref{sec-continuous} and Appendix \ref{appendixA}.

\begin{algorithm}
\caption{Basic Partial Neighbor Search algorithm}\label{alg-basicpns}
\begin{algorithmic}
\State initialize $J_0$
\For{$k$ in $1$ to $K$}
    \State pick the Partial Neighbor Set $\mathcal{N}_k(J_{k-1}) \subset \mathcal{N}(J_{k-1})$
    \State choose the next jump chain State $J_{k} \in \mathcal{N}_k(J_{k-1})$ such that $$\hat{P}(J_{k} = y \mid J_{k-1}) \propto \mathcal{Q}(J_{k-1}, y)\min\Bigg\{1, \frac{\pi(y) \mathcal{Q}(y, J_{k-1})}{\pi(J_{k-1}) \mathcal{Q}(J_{k-1}, y)}\Bigg\}$$
    \State calculate multiplicity list $M_{k-1} \gets 1 + G$ where $G \sim \text{Geometric}(p)$ with 
    $$p = \sum_{z \in \mathcal{N}_k(J_{k-1})}\mathcal{Q}(J_{k-1}, z)\min\Bigg\{1, \frac{\pi(z) \mathcal{Q}(z, J_{k-1})}{\pi(J_{k-1}) \mathcal{Q}(J_{k-1}, z)}\Bigg\}$$
\EndFor
\end{algorithmic}
\end{algorithm}

The only difference between the Basic PNS (Algorithm \ref{alg-basicpns}) and Rejection-Free (Algorithm \ref{alg-rf}) is that we only calculate the transition probability and all the corresponding values for a subset $\mathcal{N}_k$ of all the neighbors for each step within the loop. Here, $\mathcal{N}_k(J_{k-1})$ is a subset of $\mathcal{N}(J_{k-1})$ at our choice, and the subscript $k$ in $\mathcal{N}_k$ represents the subset of neighbors for step $k$. For example, we can simply say that $\mathcal{N}_k(J_{k-1})$ is a random subset of $\mathcal{N}(J_{k-1})$ with half of the elements. In addition, $\mathcal{Q}_k(X, Y)$ is the corresponding proposal distribution satisfying $\mathcal{Q}_k(x, y) \propto \mathcal{Q}(x, y)$ for $Y \in \mathcal{N}_k(x)$ and $\mathcal{Q}_k(x, y) = 0$ otherwise. However, the Markov chain produced by Algorithm \ref{alg-basicpns} is different from the true MCMC, and it might not converge to the true density $\pi$, as we now show.

\subsection{Example 1 of the Nonconvergence problem by Basic PNS}

\begin{figure}
    \centering
    \includegraphics[width= .7\linewidth]{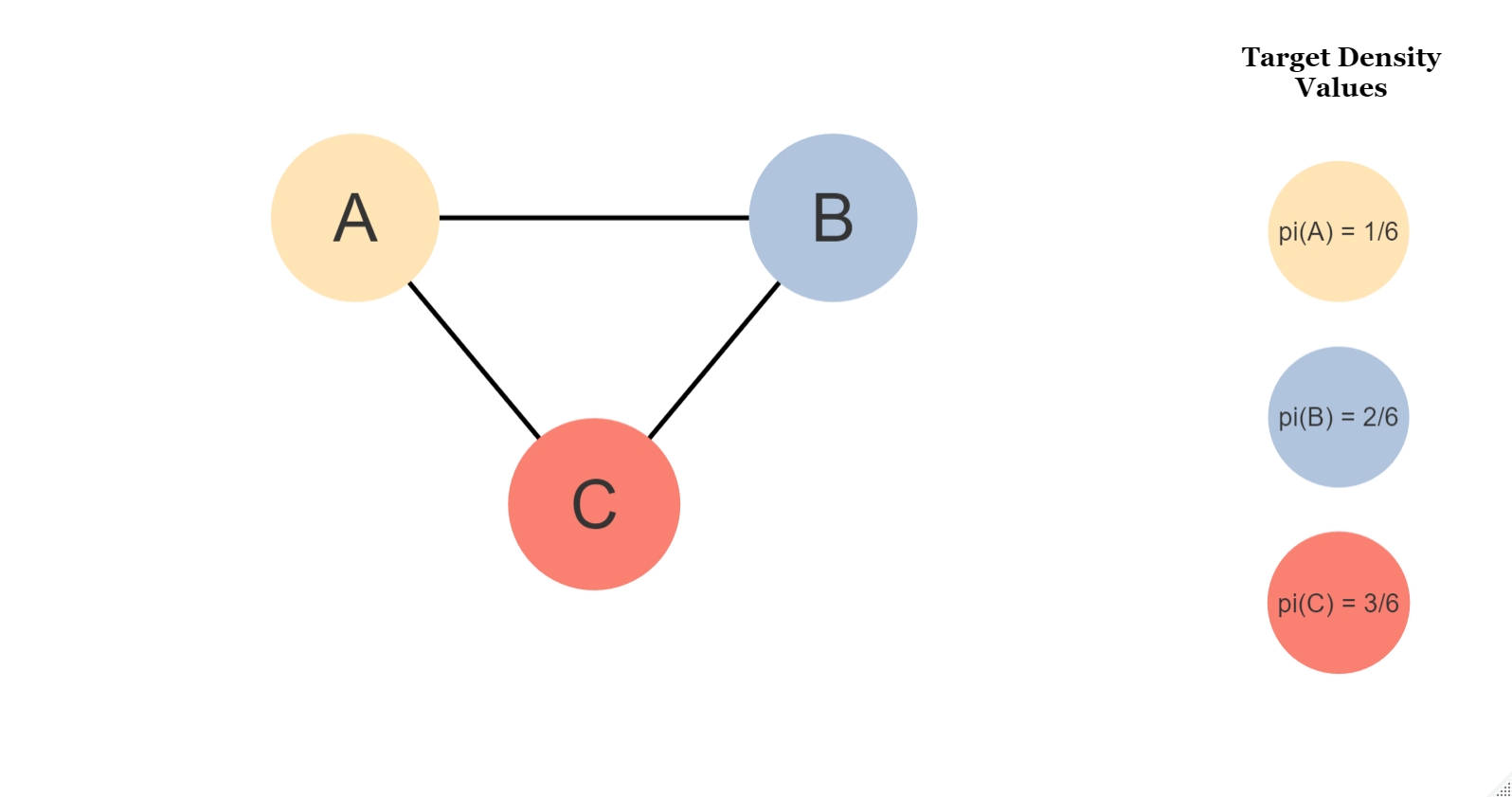}
    \caption{Diagram of Example 1 showing non-convergence property of the Basic PNS.}
    \label{fig-basic001}
\end{figure}

The first example is shown in Figure \ref{fig-basic001}, from which we have $\pi(A) \propto 1$, $\pi(B)  \propto 2$, and $\pi(C) \propto 3$. We consider the Basic PNS algorithm with a uniform proposal distribution $\mathcal{Q}$. In addition, only half of the neighbors are chosen for $\mathcal{N}_k$ at each step. That is, we only need to consider one neighbor each time. 

Then, if the MCMC is located at state A, then $\mathcal{N}(A) = \{B, C\}$. $\mathcal{N}_k(A) = \{B\}$ or $\{C\}$ each with $50\%$ probability, and thus, the algorithm will force the chain to move to either B or C with $50\%$ probability. Similarly, when the Markov chain is located at state B, the next state will be A or C with $50\%$ probability, and when the Markov chain is located at state C, the next state will be A or B with $50\%$ probability.

On the other hand, we can calculate the corresponding multiplicity lists $M_{A}$, $M_{B}$, and $M_{B}$ at state $A$ as follows:
\begin{enumerate}
    \item $\hat{P}[B \mid A] \propto P[B \mid A] = \mathcal{Q}(A, B) \min\{1, \frac{\pi(B)\mathcal{Q}(B, A)}{\pi(A)\mathcal{Q}(A, B)}\} = 0.5$;
    \item $\hat{P}[C \mid A] \propto P[C \mid A] = \mathcal{Q}(A, C) \min\{1, \frac{\pi(C)\mathcal{Q}(C, A)}{\pi(A)\mathcal{Q}(A, C)}\} = 0.5$;
    \item the transition probabilities $\hat{P}$ from A to either B or C in Rejection-Free are both $50\%$;
    \item $M_{A} =  1 + G$ where $G \sim \text{Geom}(P[B \mid A] + P[C \mid A]) = \text{Geom}(1)$
    \item $\mathbb{E}(M_{A}) = 1$
    \item Similarly, we have $\mathbb{E}(M_{B}) = \frac{5}{4}$, $\mathbb{E}(M_{A}) = \frac{9}{4}$
\end{enumerate} 

Thus, for the Basic PNS Chain $\{J_k, M_k\}_{k = 1}^K$ with large $K$, the proportions $\mathcal{P}$ of state A, B, and C in the Markov chain are
\begin{equation}
\begin{aligned}
    \mathcal{P}_{\text{Basic PNS}}(A) & = \frac{\sum_{J_k = A} M_k}{\sum_{k=1}^K M_k} = \frac{1}{1 + \frac{5}{4} + \frac{9}{4}} = \frac{2}{9} \ne \pi(A) = \frac{1}{6}; \\
    \mathcal{P}_{\text{Basic PNS}}(B) & = \frac{5}{18} \ne \pi(B) = \frac{1}{3}; \\
    \mathcal{P}_{\text{Basic PNS}}(C) & = \frac{1}{2} = \pi(C). \mbox{ For state C, it is just a coincidence}
\end{aligned}
\end{equation}
This example shows that the samples from Basic PNS are not converging to the target density $\pi$. 

\subsection{Example 2 of the Nonconvergence problem by Basic PNS}

\begin{figure}
    \centering
    \includegraphics[width= .7\linewidth]{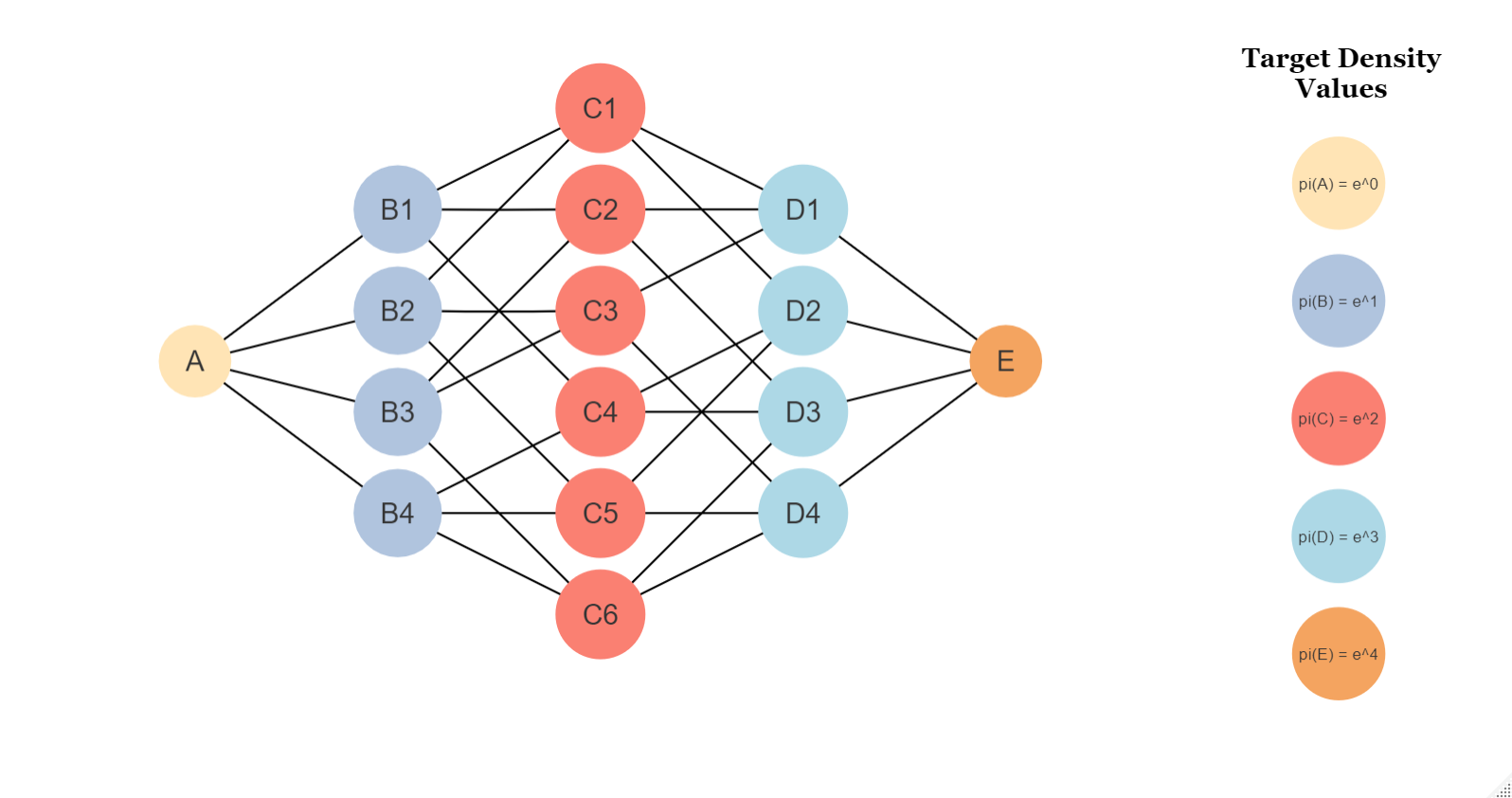}
    \caption{Diagram of Example 2 showing non-convergence property of the Basic PNS.}
    \label{fig-basic002}
\end{figure}

The second example is shown in Figure \ref{fig-basic002}, which is a much larger problem compared to the first example. We have 16 states in example 2. All States are connected to exactly four states. The target density is described as $\pi(A) \propto 1$, $\pi(B_1)  =\pi(B_2)  = \pi(B_3)  = \pi(B_4) \propto e$, $\pi(C_1)  =\pi(C_2)  =  \dots = \pi(C_6) \propto e^2$, $\pi(D_1)  =\pi(D_2) = \pi(D_3) = \pi(D_4) \propto e^3$, and $\pi(E) \propto e^4$. This example is too large to be calculated by hand, so we use simulations to calculate the limiting distribution of the samples. The convergence of the sampling distribution is measured by the Total Variation Distance (TVD).

Given the Markov chain $\{X_k\}_{k=1}^K$ generated by Metropolis algorithm, the sampling distribution is defined as $\mathcal{P}_{\text{Sampled}}(x) = \frac{\sum_{k=1}^K \mathbbm{1}(X_k = x)}{K}$, $\forall x \in \mathcal{S}$, where $\mathbbm{1}$ represents the indicator function. In addition, for the jump chain $\{J_k, M_k\}_{k=1}^K$ generated by either Rejection-Free or PNS, the sampling distribution is defined as $\mathcal{P}_{\text{Sampled}}(x) = \frac{\sum_{k=1}^K M_k \times \mathbbm{1}(J_k = x)}{\sum_{k=1}^K M_k}$, $\forall x \in \mathcal{S}$. The corresponding TVD values in both cases are defined as
\begin{equation}
    \text{TVD}(\mathcal{P}_{\text{Sampled}}, \pi)= \frac{1}{2} \sum_{x \in \mathcal{S}} \Big\lvert \mathcal{P}_{\text{Sampled}}(x) - \pi(x) \Big\rvert.
\end{equation}
According to the definition, TVD is strictly between $[0, 1]$. When the sampling distribution $\mathcal{P}_{\text{Sampled}}$ gets closer to the target distribution $\pi$, TVD will decrease to 0. In other words, convergence to stationarity is described by how quickly TVD decreases to 0.

\begin{figure}
    \centering
    \includegraphics[width= \linewidth]{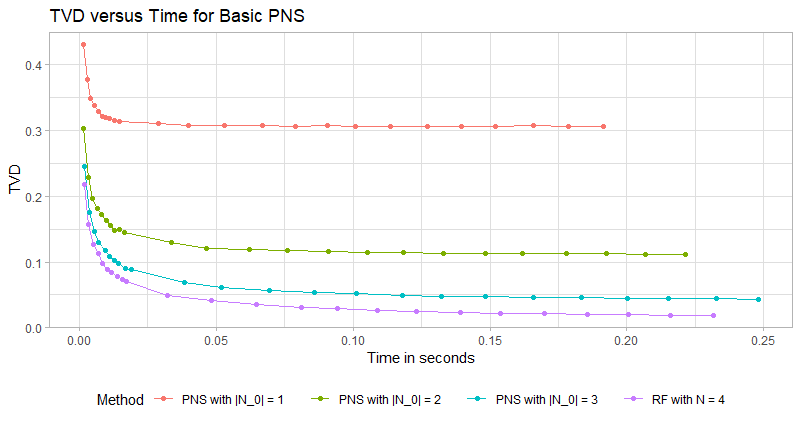}
    \caption{Average values of TVD between samples and the target density $\pi$ for Example 2 as a function of average CPU time in seconds for four scenarios: Rejection-Free and Basic PNS with three different Partial Neighbor Set sizes. Each dot within the plot represents the result of the average TVD value and average CPU time in seconds from 1000 simulation runs given a certain original sample size, where the sizes are $\{50, 100, 150, 200, \dots, 500, 1000, 1500, 2000, \dots, 7500\}$.}
    \label{fig-basic003}
\end{figure}
    
The simulation results are shown in Figure \ref{fig-basic003}. For a given amount of samples ($K = 50$, $100$, $150$, $200$, $\dots$, $500$, $1000$, $1500$, $2000$, $\dots$, $7500$), we did $1000$ simulations for each of them. The TVD values and the times here are the average values from these 1000 simulations. We compared four methods: Rejection Free and Basic PNS with three different subset sizes. The Markov chains, produced by Rejection-Free, will converge to the target density, so the TVD value gets close to $0$ at last. For PNS with $\lvert \mathcal{N}_0 \rvert = 1$, we select one random neighbor among all four neighbors at a time, forcing the chain to move to that state. This method is the worst, and it converges at around 0.3. PNS with $\lvert \mathcal{N}_0 \rvert = 2$ means that we randomly select two neighbors at every step and apply the Rejection-Free technique (select from these two states by probability proportional to the transition probability, and calculate the multiplicity list by the average of the transition probabilities). All three PNS algorithms are not converging to the target density $\pi$. 

Both Examples show that the samples from Basic PNS will not converge to the target distribution $\pi$. Thus, we turn attention to a more promising avenue, the unbiased version of the Partial Neighbor Search algorithm, where convergence to stationarity is guaranteed.

\section{Unbiased Partial Neighbor Search algorithm}
\label{sec-unbiasedpns}

We first need to review the alternating chains technique for the Rejection-Free algorithm \citep{rosenthal2021jump} to introduce our upgraded algorithm version.

\subsection{Alternating Chains for Rejection-Free}

We may wish to switch between two or more different proposal distributions for the M-H algorithm. An example of the M-H algorithm with alternating chains for every $L_0$ steps among $\mathcal{I}$ proposal distributions $\mathcal{Q}^0, \mathcal{Q}^1, \dots, \mathcal{Q}^{\mathcal{I}-1}$ is shown as Algorithm \ref{alg-mhac}. 

\begin{algorithm}
\caption{Metropolis-Hasting algorithm with Alternating Chains}\label{alg-mhac}
\begin{algorithmic}
\State initialize $i \gets 0$ \Comment{start with proposal distribution $\mathcal{Q}^0$}
\State initialize $L \gets L_0$  \Comment{start with $L_0$ remaining samples}
\State initialize $X_0$
\For{$k$ in $1$ to $K$}
    \State random $Y$ based on $\mathcal{Q}^i(X_{k-1}, \cdot)$ 
    \State random $U_k \sim \text{Uniform}(0, 1)$
    \If{$U_k < \frac{\pi(Y) \mathcal{Q}^i(Y, X_{k-1})}{\pi(X_{k-1}) \mathcal{Q}^i(X_{k-1}, Y)}$} \State \Comment{accept with probability $\min \Big{\{} 1,  \frac{\pi(Y) \mathcal{Q}^i(Y, X_{k-1})}{\pi(X_{k-1}) \mathcal{Q}^i(X_{k-1}, Y)} \Big{\}} $}
    \State $X_{k} \gets Y_k$  \Comment{accept and move to state $Y$}
    \Else \State $X_{k} \gets X_{k-1}$ \Comment{reject and stay at $X_{k-1}$}
    \EndIf 
    \State $L \gets L - 1$ \Comment{one less remaining sample from the proposal distribution}
    \If{$L = 0$} \Comment{if we don't have enough remaining samples}
        \State $i \gets {i + 1 \mod \mathcal{I}}$ \Comment{switch to the next proposal distribution}
        \State $L \gets L_0$ \Comment{$L_0$ remaining states for the next proposal distribution}
    \EndIf
\EndFor
\end{algorithmic}
\end{algorithm}

However, if we proceed with alternating chains naively for Rejection-Free, it can lead to bias. For each proposal distribution $\mathcal{Q}_i$,  we need to get the same amount of samples by the original sample size ($\sum_{k=1}^K M_k$) instead of the jump sample size ($K$) to fix the bias problem. For $\mathcal{I}$ proposal distributions $\mathcal{Q}_0, \mathcal{Q}^1, \dots, \mathcal{Q}^{\mathcal{I}-1}$, the corresponding neighbor sets are $\mathcal{N}^0, \mathcal{N}^1, \dots, \mathcal{N}^{\mathcal{I}-1}$ where $\mathcal{N}^i(x) = \{y: y \in \mathcal{S}, \mathcal{Q}^i(x, y) > 0\}$ for $i = 0, 1, \dots, {\mathcal{I}-1}$. Then, if we choose to switch between proposal distributions for $L_0$ original samples, we can do alternating chains in a Rejection-Free manner as Algorithm \ref{alg-rfac}.

\begin{algorithm}
\caption{Rejection Free algorithm with Alternating Chains}\label{alg-rfac}
\begin{algorithmic}
\State initialize $i \gets 0$ \Comment{start with proposal distribution $\mathcal{Q}^0$}
\State initialize $L \gets L_0$  \Comment{start with $L_0$ remaining original samples}
\State initialize $J_0$
\For{$k$ in $1$ to $K$}
    \State calculate multiplicity list $m \gets 1 + G$ where $G \sim \text{Geometric}(p)$ with 
    $$p = \sum_{z \in \mathcal{N}^i(J_{k-1})}\mathcal{Q}^i(J_{k-1}, z)\min\Bigg\{1, \frac{\pi(z)\mathcal{Q}^i(z, J_{k-1})}{\pi(J_{k-1})\mathcal{Q}^i(J_{k-1}, z)}\Bigg\}$$
    \If{$m \le L$} \Comment{if we have enough remaining original samples}
        \State  $M_{k-1}  \gets m$, $L \gets L - m$
        \State choose the next jump chain State $J_{k} \in \mathcal{N}^i(J_{k-1})$ such that 
        $$\hat{P}(J_{k} = y \mid J_{k-1}) \propto \mathcal{Q}^i(J_{k-1}, y)\min\Bigg\{1, \frac{\pi(y) \mathcal{Q}^i(y, J_{k-1})}{\pi(J_{k-1}) \mathcal{Q}^i(J_{k-1}, y)}\Bigg\}$$
    \Else \Comment{if we don't have enough remaining original samples}
        \State $M_{k-1}  \gets L$, $L \gets L_0$, $J_{k} \gets J_{k-1}$, $i \gets (i + 1 \mod \mathcal{I})$ \State \Comment{stay at $J_{k-1}$ for $L$ times and switch to the next $\mathcal{N}^i$}
    \EndIf
\EndFor
\end{algorithmic}
\end{algorithm}

Algorithm \ref{alg-rfac} is equivalent to Algorithm \ref{alg-mhac} except that algorithm \ref{alg-rfac} computes immediate repeated state for each proposal distribution all at once. As such, it has no bias, is consistent, and will converge to the target distribution correctly.

\subsection{Alternating Chains for Partial Neighbor Search}
\label{subsec-alternating}

Alternating Chains can also be applied to PNS. We first define the meaning of Partial Neighbor Sets here. For simplicity, we focus on discrete cases here and will define the Partial Neighbor Sets for general state space in Appendix \ref{appendixA}.

Before we start our Markov chain, we have a proposal distribution $\mathcal{Q}$ with a corresponding neighbor set $\mathcal{N}$ where $\mathcal{N}(x) \coloneqq \{y \in \mathcal{S} \mid \mathcal{Q}(x, y) > 0\}$. A Partial Neighbor Set means any function $\mathcal{N}_i$ satisfies the following conditions:
\begin{enumerate}
    \item $\mathcal{N}_i: \mathcal{S} \to \mathbf{P}(\mathcal{S})$, where $\mathcal{S}$ is the state space, and $\mathbf{P}(\mathcal{S})$ is the power set of $\mathcal{S}$;
    \item $\mathcal{N}_i(x) \subset \mathcal{N}(x)$, $\forall x \in \mathcal{S}$;
    \item $y \in \mathcal{N}_i(x)  \iff x \in \mathcal{N}_i(y)$, $\forall x, y \in \mathcal{S}$;
\end{enumerate}
Usually, we want to pick $\mathcal{N}_i$ such that $\lvert \mathcal{N}_i(x) \rvert < \lvert \mathcal{N}(x) \rvert$ to perform proper PNS. In addition, to insure irreducibility, we need to make sure $\cup_{i=0}^{\mathcal{I}-1} \mathcal{N}_i(x)= \mathcal{N}(x)$ for all $x \in \mathcal{S}$. The corresponding proposal distribution is defined to be $\mathcal{Q}_i(x, y): \mathcal{S} \times \mathcal{S} \to \mathbb{R}$, where $\mathcal{Q}_i(x, y) \propto \mathcal{Q}(x, y)$ for $y \in \mathcal{N}_i(x)$ and $\mathcal{Q}_i(x, y) = 0$ otherwise;

Therefore, we propose the Unbiased Partial Neighbor Search (Unbiased PNS) with Alternating Chains for every $L_0$ original samples as shown in Algorithm \ref{alg-unbiasedpns}. The proof that the Markov chain produced by Unbiased PNS will converge to the target distribution $\pi$ is shown in Appendix \ref{appendixA}. 

Again, in Algorithm \ref{alg-unbiasedpns}, when we need to pick our next state according to the given probabilities, we can use the technique shown in Appendix \ref{appendixC}, which is faster than other methods to sample proportionally.

\begin{algorithm}
\caption{Unbiased Partial Neighbor Search}\label{alg-unbiasedpns}
\begin{algorithmic}
\State select $\mathcal{N}_i$ for $i = 0, 1, \dots, \mathcal{I}-1$ where $\cup_{i=0}^{\mathcal{I}-1} \mathcal{N}_i(X)= \mathcal{N}(X)$
\State initialize $i \gets 0$ \Comment{start with proposal distribution $\mathcal{Q}_0$}
\State initialize $L \gets L_0$  \Comment{start with $L_0$ remaining original samples}
\State initialize $J_0$
\For{$k$ in $1$ to $K$}
    \State calculate multiplicity list $m \gets 1 + G$ where $G \sim \text{Geometric}(p)$ with 
    $$p =  \sum_{z \in \mathcal{N}_i(J_{k-1})}\mathcal{Q}_i(J_{k-1}, z)\min\Bigg\{1, \frac{\pi(z)\mathcal{Q}_i(z, J_{k-1})}{\pi(J_{k-1})\mathcal{Q}_i(J_{k-1}, z)}\Bigg\}$$
    \If{$m \le L$} \Comment{if we have enough remaining original samples}
        \State  $M_{k-1}  \gets m$, $L \gets L - m$
        \State choose the next jump chain State $J_{k} \in \mathcal{N}_i(J_{k-1})$ such that $$\hat{P}(J_{k} = y \mid J_{k-1}) \propto \mathcal{Q}_i(J_{k-1}, y)\min\Bigg\{1, \frac{\pi(y) \mathcal{Q}_i(y, J_{k-1})}{\pi(J_{k-1}) \mathcal{Q}_i(y, J_{k-1})}\Bigg\}$$
    \Else \Comment{if we don't have enough remaining original samples}
        \State $M_{k-1}  \gets L$, $L \gets L_0$, $J_{k} \gets J_{k-1}$, $i \gets (i + 1 \mod \mathcal{I})$ \State \Comment{stay at $J_{k-1}$ for $L$ times and switch to the next $\mathcal{N}_i$}
    \EndIf
\EndFor
\end{algorithmic}
\end{algorithm}

The Markov chains produced by Algorithm \ref{alg-unbiasedpns} will converge to the target distribution, but how is its efficiency compared to the Metropolis-Hasting algorithm and Rejection-Free? We will compare these three algorithms with some simulations in Section \ref{sec-qubo}.

\section{Application to QUBO model}
\label{sec-qubo}

Quadratic unconstrained binary optimization (QUBO) has been rising in importance in combinatorial optimization because of its wide range of applications in finance and economics to machine learning \citep{kochenberger2014unconstrained}. It can also be used as a sampling question, which aims to sample from the distribution
\begin{equation}
    \pi(x) = \exp\{x^T Q x\} \mbox{, where } x \in \{0, 1\}^N
\end{equation}
for a given $N$ by $N$ matrix $Q$ (usually symmetric or upper triangular).

To run our algorithm, we used uniform proposal distributions among all neighbors where the neighbors are defined as binary vectors with Hamming distance 1. That is, $\mathcal{Q}(x, y) = \frac{1}{N}$ for $\forall y$ such that $\lvert x - y \rvert = \sum_{i=1}^N \lvert x_i - y_i \rvert = 1$, $\forall x, y \in \{0, 1\}^N$. Thus, the neighbors are all binary vectors different by one flip. For the first simulation here, the PNS neighbor sets $\mathcal{N}_0$, $\mathcal{N}_1$ are chosen systematically, where $\mathcal{N}_0$ represents flip entries from 1 to $\lfloor \frac{N}{2} \rfloor$, and $\mathcal{N}_1$ represents flip entries from $\lfloor \frac{N}{2} \rfloor + 1$ to $N$. We will discuss many other choices for the PNS neighbor sets in Section \ref{sec-chooseparameters}

\begin{figure}
    \centering
    \includegraphics[width= \linewidth]{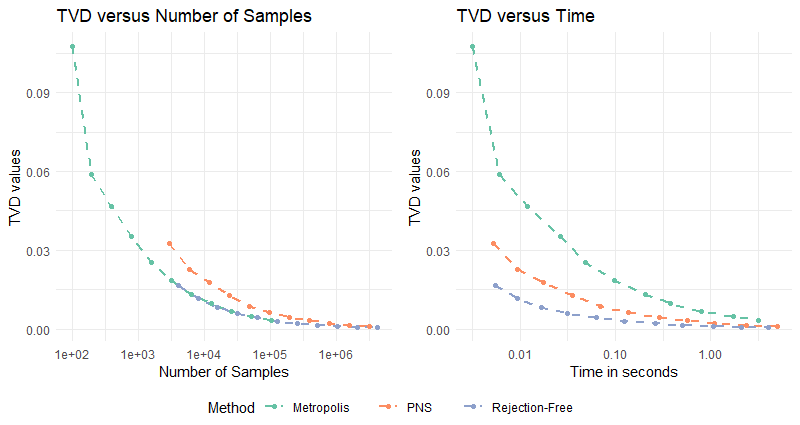}
    \caption{Average values of TVD between sampling and target density $\pi$ as a function of the number of iterations (left) and average time in seconds (right) for three methods: Metropolis algorithm, Rejection-Free, and Unbiased PNS. We used upper triangular $16 \times 16$ QUBO matrix, generated randomly by $Q_{i, j} \sim N(0, 10^2)$ for upper triangular elements. Each dot within the plot represents the result of the average TVD value and time used for 1000 simulation runs given certain original sample sizes. The original sample sizes for the Metropolis algorithm are $\{300, 600, 1200, 2400, \dots, 3072000\}$. The original samples from Rejection-Free are 40x more than those from Metropolis, and the original samples from Unbiased PNS are 30x more than those from Metropolis. We choose these sizes to get a close average CPU time for all three methods. For Unbiased PNS, we used $\lvert \mathcal{N}_k \rvert = 8$ and $L_0 = 100$.}
    \label{fig-unbiased001}
\end{figure}

Figure \ref{fig-unbiased001} shows the results for comparing the Metropolis algorithm, Rejection-Free, and Unbiased PNS by sampling from a $16\times16$ QUBO question to a single-core implementation. The QUBO matrix $Q$ is an upper triangular matrix, where the non-zero elements were generated randomly by $Q_{i, j} \sim \text{Normal}(0, 10^2)\mbox{, } \forall i \le j$. We compare the TVD values from the Metropolis Algorithm, Rejection-Free, and Unbiased PNS with different original sample sizes. For the Metropolis algorithm, the numbers of original samples are $\{100, 200, 400, 800, \dots, 102400\}$. The number of original samples for Rejection-Free is 40 times more than the number for the Metropolis algorithm, and the number for Unbiased PNS is 30 times more. We used these many numbers of original samples to make the run-time for all three algorithms to be about the same. For each given number of original sample sizes, we simulated 1000 runs, recorded the corresponding TVD values and times used for the sampling part, and calculated the average values given the number of original samples. Note that the average time represents the CPU time for where the algorithm is calculated by running the algorithm on a single-core implementation. In addition, before we generate the samples, we apply the algorithm for the same number of steps for burn-in.

From Figure \ref{fig-unbiased001}, we can see that the quality of the samples by the Metropolis algorithm and Rejection-Free are the same given the original sample sizes. This result is consistent with our conclusion that Rejection-Free is identical to the Metropolis algorithm, except Rejection-Free generates the same states simultaneously with all immediately repeated states. Thus, these two algorithms are different only by the CPU time. In addition, the quality of the samples by Unbiased PNS is worse than both the Metropolis algorithm and Rejection-Free given a certain number of original samples because each Partial Neighbor Set is biased within its $L_0$ original samples, while the combination of them is unbiased. Thus, the average TVD value for Unbiased PNS is more significant for the same amount of original samples. However, for a given amount of CPU time, the performance of Unbiased PNS is much better than the Metropolis algorithm and worse than Rejection-Free.

In this case, Unbiased PNS can provide significant speedups compared to the Metropolis algorithm. On the other hand, we did not expect the Unbiased PNS can beat Rejection-Free under this circumstance. Unbiased PNS is worse than Rejection-Free in two aspects. First, the Unbiased PNS is biased within each $L_0$ original samples. In addition, at the end of each $L_0$ original samples, the algorithm is very likely to reject once and stay in the same state. Thus, Unbiased PNS is not entirely rejection-free anymore and usually rejects once for every $L_0$ original samples. 

However, we need the Unbiased PNS because we may not have as many circuit blocks in the parallelism hardware as we want. Thus, we can, at most, consider a limited number of neighbors for some specialized hardware, such as DA. Thus, Rejection-Free is not applicable in this case, and we would need the help of Unbiased PNS, which is better than applying the Metropolis algorithm.

Again, parallelism in computer hardware can increase the speed for both Rejection-Free and Unbiased PNS by mapping the calculation of the transition probabilities for different neighbors onto different cores \citep{rosenthal2021jump}. Besides that, we can also use multiple replicas at different temperatures, such as in parallel tempering, or deploy a population of replicas at the same temperature \citep{sheikholeslami2021power}. Combining these methods by parallelism can yield 100x to 10,000x speedups for both Rejection-Free and Unbiased PNS \citep{sheikholeslami2021power}.

\section{Optimal Choice for the Partial Neighbors} 
\label{sec-chooseparameters}

In the section \ref{sec-qubo}, we used two systematically pre-selected neighbor sets $\mathcal{N}_0$, $\mathcal{N}_1$. However, for the optimization version of the QUBO question, we concluded that random Partial Neighbor Sets are better than systematic Partial Neighbor Sets; see \cite{chen2022optimization}. Thus, we compare two ways of choosing partial neighbor sets here: systematic and random. For simplicity, assume that we have $N$ neighbors for all states, and we use Unbiased PNS neighbor sets of size $n$. Therefore, we have $\binom{N}{n}$ different partial neighbor sets. For systematic method, we choose $\mathcal{I}$ PNS neighbor sets $\{\mathcal{N}_i\}_{i=1}^\mathcal{I}$, where $\cup_{i=1}^\mathcal{I} \mathcal{N}_i(x) = \mathcal{N}(x)$. We proceed with each Partial Neighbor Sets within the loop for $L_0$ original samples. We use the notation $\mathcal{N}_i(x)$ for systematic Partial Neighbor Sets because $\mathcal{N}_i(x)$ is pre-determined for $i = 1, 2, \dots, \mathcal{I}$. On the other hand, for random Partial Neighbor Sets, we choose a new set $\mathcal{N}_k$ from all $\binom{N}{n}$ potential Partial Neighbor Sets after each $L_0$ original samples. We use the notation $\mathcal{N}_k(x)$ for random partial neighbor sets because $\mathcal{N}_k(x)$ can be different for every PNS step, and the subscript $k$ represents the special partial neighbor set for step k. For both methods, $\mathcal{Q}_i(x, y), \mathcal{Q}_k(x, y) \propto \mathcal{Q}(x, y)$ for $y \in \mathcal{N}_i(x)$, and $\mathcal{Q}_i(x, y) = \mathcal{Q}_k(x, y) = 0$ otherwise. 

To compare the above two methods of selecting Partial Neighbor Sets, we apply them to the previous $16 \times 16$ QUBO question, and we test the following four scenarios: 
\begin{itemize}
    \item two systematic partial neighbor sets where the first set considers flipping the first half of the bits, and the second set considers flipping the second half of the bits;
    \item four systematic partial neighbor sets where each set  considers flipping a quarter of the bits;
    \item random partial neighbor sets with $\frac{N}{2}$ partial neighbors; that is, each set considers flipping a random set of bits with size $\frac{N}{2}$;
    \item random partial neighbor sets with $\frac{N}{4}$ partial neighbors; that is, each set considers flipping a random set of bits with size $\frac{N}{4}$;
\end{itemize}   

\begin{figure}
    \centering
    \includegraphics[width= \linewidth]{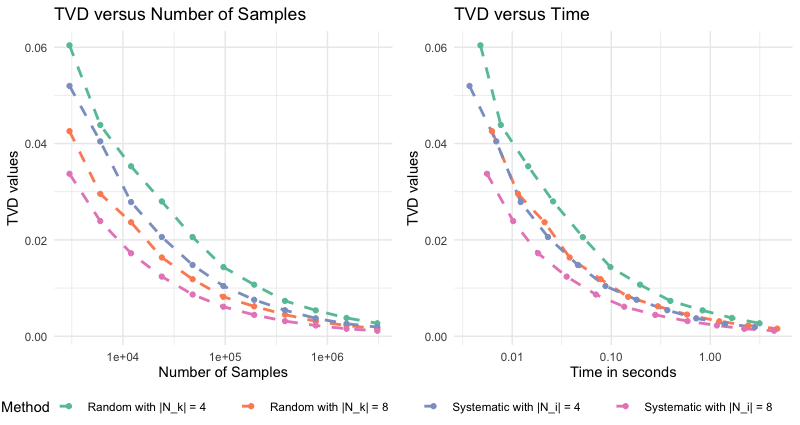}
    \caption{Average values of TVD between sampling and target density $\pi$ as a function of the number of iterations (left) and average time in seconds (right) for four scenarios: Systematic PNS and Random PNS, each with Partial Neighbor Set sizes of 4 and 8. Random upper triangular $16 \times 16$ QUBO matrix is generated randomly by $Q_{i, j} \sim N(0, 10^2)$ for upper triangular elements. Each dot within the plot represents the average TVD value and time used for 1000 simulation runs given a certain original sample size, where the sizes are $\{3000, 6000, 12000, 24000, \dots, 3072000\}$. For all PNS, we used $L_0 = 100$.}
    \label{fig-choiceMethod001}
\end{figure}

The result is shown in Figure \ref{fig-choiceMethod001}; for this case, systematic Partial Neighbor Sets are better than random Partial Neighbor Sets. However, random Partial Neighbor Sets can be better when we run the same code with a different random seed. After running this simulation for 100 different random seeds, the systematic neighbor sets are better 56 times. Thus, we conclude that the performance of these two Partial Neighbor Sets is close to each other. We will continue using the systematic Partial Neighbor Sets in our later simulation.

In previous simulations, we naively use we used $\lvert \mathcal{N}_i(x) \rvert = 4$ or $8$ and $L_0 = 100$ in previous examples. What is the optimal choice for $\lvert \mathcal{N}_i(x) \rvert$? We want to compare $\lvert \mathcal{N}_i(x) \rvert = 2, 4, 6, 8, \dots, 14$ by the QUBO question. In previous cases, we only used the systematic Partial Neighbor Set size $n$ that can be divided evenly by $N$. For other $n$ such as $14$, we used the following Partial Neighbor Sets: first we consider flipping bits $1$ to $14$, then we consider bits $15$, $16$, and $1$ to $12$, then $13$, $14$, $15$, $16$, and $1$ to $10$, etc. We used eight systematic Partial Neighbor Sets of size $14$.

\begin{figure}
    \centering
    \includegraphics[width= \linewidth]{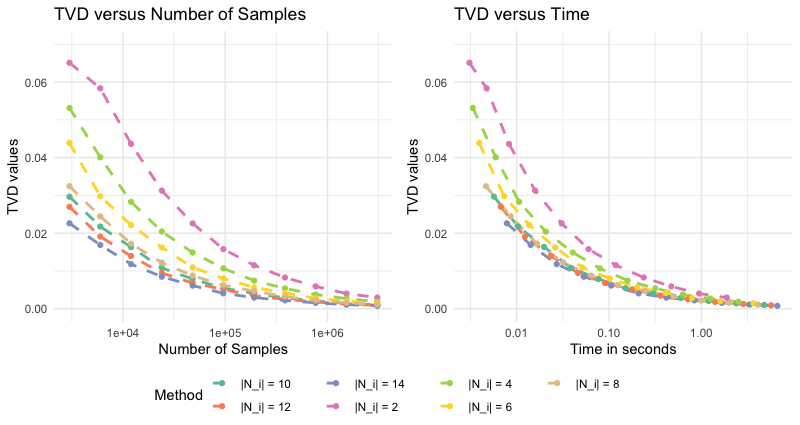}
    \caption{Average values of TVD between sampling and target density $\pi$ as a function of the number of iterations (left) and average time in seconds (right) for four scenarios: Unbiased PNS with different partial neighbor set sizes \{2, 4, 6, \dots, 14\}. Random upper triangular $16 \times 16$ QUBO matrix is generated randomly by $Q_{i, j} \sim N(0, 10^2)$ for upper triangular elements. Each dot within the plot represents the average TVD value and time used for 1000 simulation runs given a certain original sample size, where the sizes are $\{3000, 6000, 12000, 24000, \dots, 3072000\}$. For all PNS, we used $L_0 = 100$.}
    \label{fig-choiceM001}
\end{figure}

Figure \ref{fig-choiceM001} shows the results for comparing the Unbiased PNS with $\lvert \mathcal{N}_i\rvert \equiv 2$, $4$, $6$, $8$, $10$, $12$, and $14$ for $\forall X \in \{0, 1\}^{16}$. Every other simulation setting is the same as the previous simulations for the QUBO question. The choice of $L_0$ is still 100. According to the left plot, we can say that given the same amount of original samples, the Markov chain from $\lvert \mathcal{N}_i\rvert = 14$ is the least biased. On the other hand, from the right plot, we can conclude that, given the same amount of CPU time, the sample quality from $\lvert \mathcal{N}_i\rvert = 14$ is the best. In addition, the performances are close to each other for all cases where $\lvert \mathcal{N}_i\rvert \ge 8$. Note that a single-core implementation makes all these comparisons by the CPU time, and parallelism hardware can provide speedups. Intuitively, the more tasks that can be calculated simultaneously, the greater the speedup. Thus, if we apply our Unbiased PNS on parallelism hardware with a limited number of parallel tasks that can be computed simultaneously, we should choose the largest possible partial neighbor set size $\lvert \mathcal{N}_i\rvert$.

\begin{figure}
    \centering
    \includegraphics[width= \linewidth]{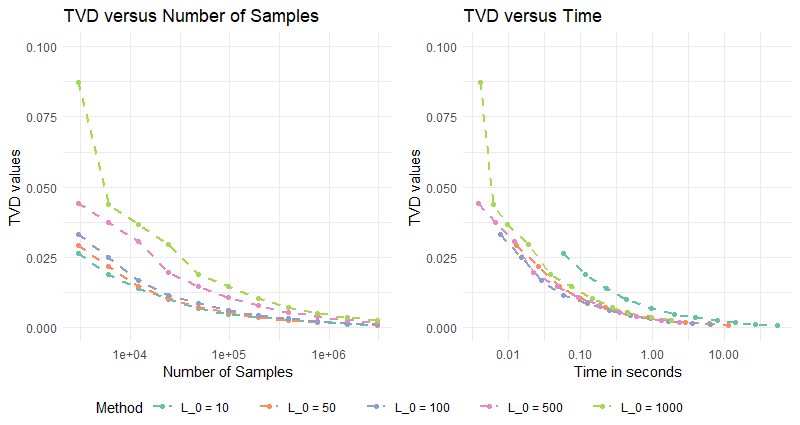}
    \caption{Average values of TVD between sampling and target density $\pi$ as a function of the number of iterations (left) and average time in seconds (right) for Unbiased Partial Neighbor Search with different sizes of $L_0$. Random upper triangular $16 \times 16$ QUBO matrix is generated randomly by $Q_{i, j} \sim N(0, 10^2)$ for upper triangular elements. Each dot within the plot represents the average TVD value and time used for 1000 simulation runs given a certain original sample size, where the sizes are $\{300, 600, 1200, 2400, \dots, 3072000\}$. For all PNS, we used $\lvert \mathcal{N}_i \rvert = 8$.}
    \label{fig-choiceL001}
\end{figure}

Furthermore, Figure \ref{fig-choiceL001} shows the results for comparing the Unbiased PNS with $L_0 = 10$, $50$, $100$, $500$, and $1000$. Again, every other settings of the simulation is the same, and $\lvert \mathcal{N}_i\rvert$ is still 8, $\forall x \in \{0, 1\}^{16}$. The left plot shows that given the same samples, the Markov chain from $L_0 = 10$ is the least biased. However, the right plot shows that, given the same amount of CPU time, the TVD values are about the same except $L_0 = 10$. The case with $L_0 = 100$ is slightly better than the other cases, but the difference is not too large. $L_0 = 10$ becomes the worst since such $L_0$ has too many rejections (about one rejection for every ten samples). Thus, in practice, the choice of $L_0$ is not that important as long as it is not extreme.

\section{Continuous Models}
\label{sec-continuous}

We talked about the application of Unbiased PNS to discrete cases in the previous sections. Can we apply Unbiased PNS on continuous models? We first review how to apply Rejection-Free on general (continuous) state space as in Theorem 13 from \cite{rosenthal2021jump}.

Let $\mathcal{S}$ be a general state case, and $\mu$ a $\sigma$-finite reference measure on $\mathcal{S}$. Suppose a Markov chain on $\mathcal{S}$ has transition probabilities $P(x, d y) \propto q(x, y) \mu(d y)$ for $q: \mathcal{S} \times \mathcal{S} \to [0, 1]$. Again let $\hat{P}$ be the transitions for the corresponding jump chain ${J_k}$ with multiplicities ${M_k}$. Then:

\begin{enumerate}
    \item $\hat{P}(x, \{x\}) = 0$, and for $x \ne y$, $\hat{P}(x, d y) = \frac{q(x, y)}{\int q(x, z) \mu(d z)} \mu(d y)$
    \item  The conditional distribution of $M_k$ given $J_k$ is equal to the distribution of $1+G$ where $G$ is a geometric random variable with success probability $p = \alpha(J_k)$ where $\alpha(x) = P[X_{k+1} \ne x \mid X_{k} = x] = \int q(x, z) \mu(d z) = 1 - r(x) = 1 - P(x \mid x)$
    \item If the original chain is $\phi$-irreducible (see, e.g., \cite{meyn2012markov}) for some positive $\sigma$-finite measure $\phi$ on $\mathcal{X}$, then the jump chain is also $\phi$-irreducible for the same $\phi$.
    \item  If the original chain has stationary distribution $\pi(x) \mu(d x)$, then the jump chain has stationary distribution given by $\hat{\pi}(x) = c \alpha(x) \pi(x) \mu(d x)$ where $c^{-1} = \int \alpha(y) \pi(y) \mu(d y)$
    \item If $h : \mathcal{S} \to \mathbb{R}$ has finite expectation, then with probability $1$,
    $$\lim_{K \to \infty} \frac{\sum_{k=1}^K M_k h(J_k)}{\sum_{k=1}^K M_k} = \lim_{K \to \infty} \frac{\sum_{k=1}^K [\frac{h(J_k)}{\alpha(J_k)}]}{\sum_{k=1}^K [\frac{1}{\alpha(J_k)}]} = \pi(h) := \int h(x) \pi(x) \mu(d x)$$
\end{enumerate}

Although we have a solid theory base for Rejection-Free on general state space, applying Rejection-Free to the continuous sampling questions efficiently on most computer hardware is pretty hard. The biggest challenge is the calculation of integration $\int q(x, z) \mu(d z)$. We need many calculations for the numerical integration. In addition, such tasks can hardly be split efficiently into specialized hardware with a reasonable amount of parallel calculating units. At the same time, Unbiased PNS can be surprisingly helpful in this case. As long as the Metropolis algorithm can be applied, PNS can be applied straightforwardly without any calculation of integration. We need to choose the Partial Neighbors Sets $\mathcal{N}_i(x)$ to be a finite subset of all the neighbors $\mathcal{N}(x)$ in Algorithm \ref{alg-unbiasedpns}. We check the performance of our Unbiased PNS on a simple continuous sampling question: the Donuts Example.  

\section{Application to the Donuts Example}
\label{sec-donuts}

Inspired by \cite{chi2022donut}, we use a donuts example to show Unbiased PNS's performance on continuous state space. Suppose we have two independent random variables $\mu$ and $\theta$ where
\begin{equation}
    \mu \sim \text{Normal}^+(\mu_0, \sigma^2), \text{   } \theta \sim \text{Uniform}[0, \pi).
\end{equation}
Here, $\text{Normal}^+$ means the Truncated Normal distribution without the negative tail, and $\pi$ in the Uniform distribution means the circular constant instead of the target density. Then we define two random variables $X_1$ and $X_2$ to be 
\begin{equation}
    X_1 = \sqrt{\mu} \sin \theta, \text{   } X_2 = \sqrt{\mu} \cos \theta.
\end{equation}
The determinant of the Jacobian matrix is $\frac{1}{2}$. Thus we have
\begin{equation}
    f_{X_1, X_2}(x_1, x_2) \propto \frac{1}{\sigma}\exp{\Bigg[-\frac{(x_1^2 + x_2^2 - \mu_0)^2}{2\sigma^2}\Bigg]},
\end{equation}

\begin{figure}
    \centering
    \includegraphics[width= \linewidth]{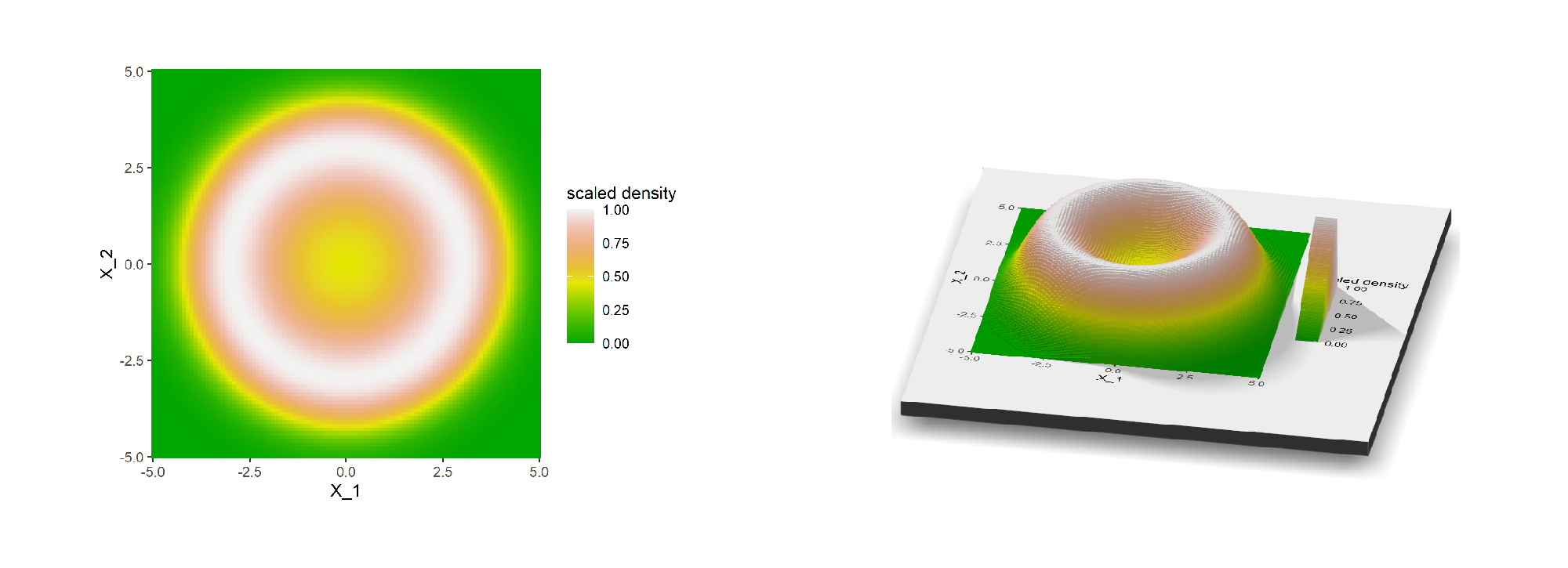}
    \caption{The scaled probability density plot for the Donuts Example with $\mu_0 = 9$ and $\sigma = 10$. The density is scaled to $[0, 1]$. We used large $\sigma$ to show the shape of our distribution. With small $\sigma$, it is hard to see the shape of a sharply peaked distribution.}
    \label{fig-donutsdensity}
\end{figure}

For example, a 3-D map for the density for $X_1$ and $X_2$ with $\mu_0= 9$, and $\sigma = 10$ is shown in Figure \ref{fig-donutsdensity}.  The density is scaled to $[0, 1]$. In our later simulation, we use $\mu_0= 9$ and $\sigma = 0.1$ instead. We use large $\sigma$ to show the shape of our distribution because it is hard to see its shape when it sharply peaks with a small $\sigma$. However, for the simulation, PNS can outperform the Metropolis algorithm when there are many rejections, so we use a small $\sigma$ to get a sharply peaked distribution to increase the rejection rate in the Metropolis algorithm. Note that Unbiased PNS and Rejection-Free are not always better than the Metropolis algorithm. For an extreme example, when we have a distribution where all the states have the same target density values, there will be no rejection for the Metropolis algorithm. At each step, the Metropolis algorithm will uniformly pick a random neighbor from the current state and move to that neighbor, while (Rejection-Free / PNS) will calculate the transition probabilities for (all / part) of the neighbors and uniformly pick a random one. The Metropolis algorithm will be far better than Rejection-Free and PNS in this case. In practice, the higher the dimension of the problem and the more sharply peaked the distribution is, the better the Rejection-Free and Unbiased PNS will be. Thus, we use $\sigma = 0.1$ for the simulation to create a sharply peaked distribution for later simulation. 

In addition, the proposal distribution is defined to be the standard normal distribution for both dimensions. That is, for $x = (x_1, x_2), y = (y_1, y_2) \in \mathbb{R}^2$, $\mathcal{Q}(x, y) = \phi(y_1 - x_1) \phi(y_2 - x_2)$ where $\phi$ is the density function of the standard normal distribution. Then for any $x \in \mathbb{R}^2$, we have $\mathcal{N}(x) = \mathbb{R}^2$.

Since the Partial Neighbor Sets are always the whole space of $\mathbb{R}^2$, it is tough for us to apply Rejection-Free here since the integration of the whole space needs too many computational resources. Even if we limit the neighbors to a small area around the current state, integration is needed as long as the problem is continuous, and the Rejection-Free will be consequentially slow. At the same time, Unbiased PNS can be applied to continuous cases without calculating integration by making minor changes to Algorithm \ref{alg-unbiasedpns}. The Unbiased PNS algorithm for continuous is stated as Algorithm \ref{alg-unbiasedpns-continuous}. In Algorithm \ref{alg-unbiasedpns-continuous}, we did not define the systematic Partial Neighbor Sets as we had for the discrete cases. We want to use Unbiased PNS with finite many partial neighbors being considered at each step, but we have uncountable neighbors. It is impossible to divide these uncountable neighbors into finite partial neighbor sets with finite sizes. Thus, we can only use the random Partial Neighbor Set, which randomizes a new finite partial neighbor set for every $L_0$ original samples. In later simulation, we use partial neighbor sets with $\lvert \mathcal{N}_k \rvert = 50$. That is, we consider 50 partial neighbors at each step. Note that, in Section \ref{subsec-alternating}, we defined the Partial Neighbor Sets, and according to the third condition, we must have reversibility for all $\mathcal{N}_i(x)$, which means $y \in \mathcal{N}_i(x)  \iff x \in \mathcal{N}_i(y)$, $\forall x, y \in \mathcal{S}$. Therefore, we choose the Partial neighbor Set $\mathcal{N}_i(x)$ as follows:
\begin{enumerate}
    \item generate $\delta_1, \delta_2 \sim \mbox{Normal}(0, 1)$;
    \item for state $x = (x_1, x_2)$, put $y = (x_1 + \delta_1, x_2 + \delta_2)$ into the Partial Neighbor Set $\mathcal{N}_i(x)$;
    \item to ensure the reversibility, also put $y' = (x_1 - \delta_1, x_2 - \delta_2)$ into the Partial Neighbor Set $\mathcal{N}_i(x)$;
    \item repeats the above steps 25 times to generate a total of 50 neighbors for the Partial Neighbor Set $\mathcal{N}_i(x)$.
\end{enumerate}
In addition, $L_0$ is selected to be $1000$. Using $L_0 = 100$ to $1000$ will not affect the sampling speed too much, similar to the conclusion in Section \ref{sec-chooseparameters}. 

\begin{algorithm}
\caption{Unbiased PNS for Continuous Case}\label{alg-unbiasedpns-continuous}
\begin{algorithmic}
\State select one Partial Neighbor Set $\mathcal{N}_0$
\State initialize $L \gets L_0$  \Comment{start with $L_0$ remaining original samples}
\State initialize $J_0$
\For{$k$ in $1$ to $K$}
    \State calculate multiplicity list $m \gets 1 + G$ where $G \sim \text{Geometric}(p)$ with 
    $$p =  \sum_{z \in \mathcal{N}_0(J_{k-1})}\mathcal{Q}_0(J_{k-1}, z)\min\Bigg\{1, \frac{\pi(z)\mathcal{Q}_0(z, J_{k-1})}{\pi(J_{k-1})\mathcal{Q}_0(J_{k-1}, z)}\Bigg\}$$
    \If{$m \le L$} \Comment{if we have enough remaining original samples}
        \State  $M_{k-1}  \gets m$, $L \gets L - m$
        \State choose the next jump chain State $J_{k} \in \mathcal{N}_0(J_{k-1})$ such that $$\hat{P}(J_{k} = y \mid J_{k-1}) \propto \mathcal{Q}_0(J_{k-1}, y)\min\Bigg\{1, \frac{\pi(y) \mathcal{Q}_0(y, J_{k-1})}{\pi(J_{k-1}) \mathcal{Q}_0(y, J_{k-1})}\Bigg\}$$
    \Else \Comment{if we don't have enough remaining original samples}
        \State $M_{k-1}  \gets L$, $L \gets L_0$, $J_{k} \gets J_{k-1}$, \State \Comment{stay at $J_{k-1}$ for the remaining $L$ times}
        \State select a new Partial Neighbor Set $\mathcal{N}_0$
    \EndIf
\EndFor
\end{algorithmic}
\end{algorithm}

Moreover, we measure the sampling results by bias instead of the TVD. The calculation of TVD in the continuous case also needs much integration, which is hard to calculate. On the other hand, given samples $\{X_1, X_2. \dots, X_K\}$, we usually use the MCMC to approximate the expected value $\mathbb{E}_\pi(h)$ of a function $h : S \to \mathbb{R}$ by the usual estimator, $\hat{e}_K(h) = \frac{1}{K}\sum_{k=1}^K h(X_{1, k}, X_{2, k})$. The Strong Law of Large Numbers for Markov chains says that assuming that $\mathbb{E}_\pi(h)$ is finite and that the Markov chain is irreducible with stationary distribution $\pi$, we must have $\lim_{K \to \infty} \hat{e}_K = \mathbb{E}_\pi(h)$. Therefore, $\text{Bias}(h) = \lvert \hat{e}_K(h) - \mathbb{E}_\pi(h)\lvert  = \lvert \frac{1}{K}\sum_{k=1}^K h(X_k) - \mathbb{E}_\pi(h) \lvert$ can also be a good measurement for the quality of the samples. According to the definition, bias is greater or equal to 0. When the samples $\{X_1, X_2. \dots, X_K\}$ gets closer to the target distribution $\pi$, the bias will decrease to 0. Thus, convergence to stationarity is described by how quickly the bias decreases to 0 for all function $h$. This property is similar to TVD from Section~\ref{sec-qubo}. In fact, for any probability distribution $\mathcal{P}_1$ and $\mathcal{P}_2$, $\text{TVD}(\mathcal{P}_1, \mathcal{P}_2) = \sup_\mathcal{S}\big(\mathcal{P}_1(\mathcal{S}), \mathcal{P}_2(\mathcal{S})\big)$ \citep{chen2016general}. 

For example, we check the sum of the bias from the first-degree terms $X_1$ and $X_2$. Since the Donuts example is centered at $0$, thus $\mathbb{E}_\pi(X_1) = \mathbb{E}_\pi(X_2) = 0$. Thus, we have 
\begin{equation}
\begin{aligned}
    \text{Bias}(X_1) + \text{Bias}(X_2) 
    & =  \hat{e}_K(X_1) - \mathbb{E}_\pi(X_1) + \hat{e}_K(X_2) - \mathbb{E}_\pi(X_2) \\
    & =  \bigg\lvert \frac{1}{K}\sum_{k=1}^K X_{1, k}   \bigg\rvert +  \bigg\lvert \frac{1}{K}\sum_{k=1}^K X_{2, k} \bigg\rvert,
\end{aligned}
\end{equation}
Note that both biases will decrease to $0$ if our Markov chain converges to the target density $\pi$. In addition, for the Rejection-Free Chain $\{J_{1,k}, J_{2, k}, M_k\}_{k=1}^K$ generated by the Unbiased PNS algorithm, the bias is defined to be  
\begin{equation}
    \text{Bias}(J_1) + \text{Bias}(J_2) = \frac{\big \lvert \sum_{k=1}^K M_k \times J_{1, k} \big \rvert}{\sum_{k=1}^K M_k} + \frac{\lvert \sum_{k=1}^K M_k \times J_{2, k} \big \rvert}{\sum_{k=1}^K M_k}
\end{equation}
The result for comparing the Metropolis algorithm and Unbiased PNS by the bias of first-degree term is shown in Figure~\ref{fig-donutresult001}. Each dot within the plot represents the average value of 100 simulation runs. For each run, we generate a Markov chain for a given number of samples for both algorithms. The average time represents the CPU time we apply the algorithm by a single-core implementation. Again, parallelism hardware such as DA can yield 100x to 10,000x speedups for Unbiased PNS \citep{sheikholeslami2021power}.

\begin{figure}
    \centering
    \includegraphics[width= \linewidth]{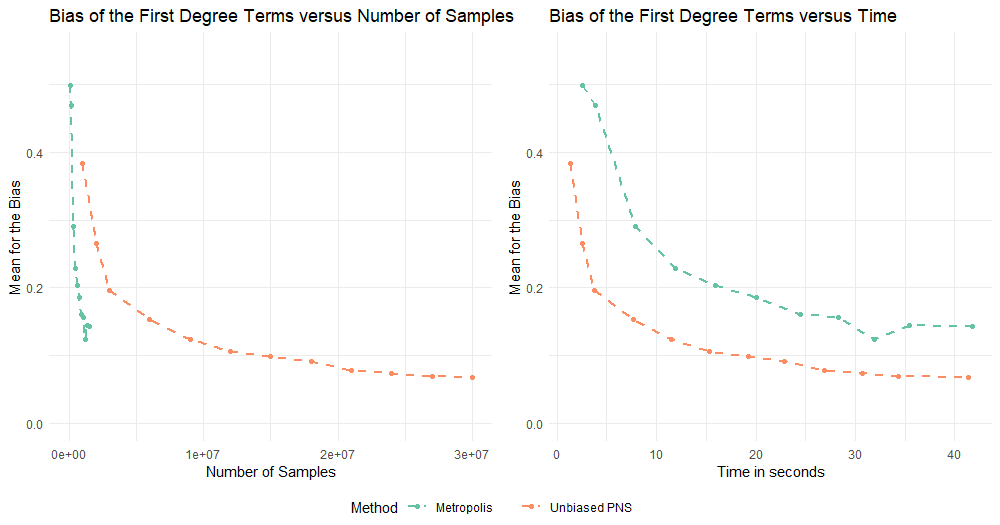}
    \caption{Sum of the Average Bias of $X_1$ and $X_2$ between sampling and target density $\pi$ as a function of the number of iterations (left) and average time in seconds (right) for two methods: Metropolis algorithm and Unbiased PNS. We used Donuts example with $r_0 = 10$ and $\sigma = 0.1$. Each dot within the plot represents the result of the average bias value and time used for 1000 simulation runs given certain original sample sizes. The original sample sizes for the Metropolis algorithm are $\{50000, 100000, 150000, 300000, 450000, \dots, 1500000\}$. The sizes for Unbiased PNS are 20x more than the sizes for the Metropolis. We choose these sizes to get a close average CPU time for both methods. For Unbiased PNS, we used $\lvert \mathcal{N}_k \rvert = 50$ and $L_0 = 1000$.}
    \label{fig-donutresult001}
\end{figure}   

From Figure~\ref{fig-donutresult001}, we can see that the quality of the samples by Unbiased PNS is again worse than the Metropolis algorithm because each Partial Neighbors Set is biased within $L_0$ original samples, while the combination of them is unbiased. Thus, the average bias values for Unbiased PNS are more significant for the same amount of samples. However, for a given amount of CPU time, the performance of Unbiased PNS is much better than the Metropolis algorithm. For this example, the Unbiased PNS can get 30x more samples than the Metropolis algorithm within the same time by a single-core implementation. Rejections slow down the Metropolis algorithm while Unbiased PNS is not influenced, and thus, Unbiased PNS works much better in this simulation.

In addition, we can also check the sum of the bias from the second degree terms $\text{Bias}(X_1^2) + \text{Bias}(X_2^2)$, the sum of the bias from the fourth degree terms $\text{Bias}(X_1^4) + \text{Bias}(X_2^4)$, and the sum of the bias from the positive rate $\text{Bias}(\mathbbm{1}(X_1 > 0)) + \text{Bias}(\mathbbm{I}(X_1 > 0))$, where $\mathbbm{1}$ means the indicator function. To calculate the bias of the second degree terms , we have $X_1^2 + X_2^2 = \mu \sim \text{Normal}^+(\mu_0, \sigma^2)$. Note that, for the Truncated normal distribution with mean $9$ and standard deviation $0.1$, the probability for a negative tail is too small, so we can treat it as a normal distribution. Thus,
\begin{equation}
    \mathbb{E}_\pi(X_1^2) = \frac{1}{2}\mathbb{E}_\pi(X_1^2+X_2^2) = \frac{1}{2}\mathbb{E}_\pi(\mu^2) \approx \frac{1}{2} \mu_0^2.
\end{equation}
\begin{equation}
\begin{aligned}
    \text{Bias}(X_1^2) + \text{Bias}(X_2^2) 
    & =  \hat{e}_K(X_1^2) - \mathbb{E}_\pi(X_1^2) + \hat{e}_K(X_2^2) - \mathbb{E}_\pi(X_2^2) \\
    & \approx  \lvert \frac{1}{K}\sum_{k=1}^K X_{1, k}^2 - \frac{1}{2} \mu_0^2  \rvert + \lvert \frac{1}{K}\sum_{k=1}^K X_{2, k}^2 - \frac{1}{2} \mu_0^2 \rvert.
\end{aligned}
\end{equation}
Similarly, 
\begin{equation}
\begin{aligned}
    \text{Bias}(X_1^4) + \text{Bias}(X_2^4) 
    =  & \hat{e}_K(X_1^4) - \mathbb{E}_\pi(X_1^4) + \hat{e}_K(X_2^4) - \mathbb{E}_\pi(X_2^4) \\
    \approx  & \lvert \frac{1}{K}\sum_{k=1}^K X_{1, k}^4 - \frac{3}{8}(\mu_0^4 + \sigma^2)  \rvert + \\ & \lvert \frac{1}{K}\sum_{k=1}^K X_{2, k}^4 - \frac{3}{8}(\mu_0^4 + \sigma^2) \rvert;
\end{aligned}
\end{equation}
\begin{equation}
\begin{aligned}
    \text{Bias}(\mathbbm{1}(X_1 > 0)) + \text{Bias}(\mathbbm{1}(X_2 > 0)) 
    =  & \hat{e}_K(\mathbbm{1}(X_1 > 0)) - \mathbb{E}_\pi(\mathbbm{1}(X_1 > 0)) + \\    
       &  \hat{e}_K(\mathbbm{1}(X_2 > 0)) - \mathbb{E}_\pi(\mathbbm{1}(X_2 > 0)) \\
    = & \lvert \frac{1}{K}\sum_{k=1}^K \mathbbm{1}(X_{1, k} > 0) - \frac{1}{2}  \rvert 
    + \\ & \lvert \frac{1}{K}\sum_{k=1}^K \mathbbm{1}(X_{2, k} > 0) - \frac{1}{2}  \rvert.
\end{aligned}
\end{equation}
\begin{figure}
    \centering
    \includegraphics[width= \linewidth]{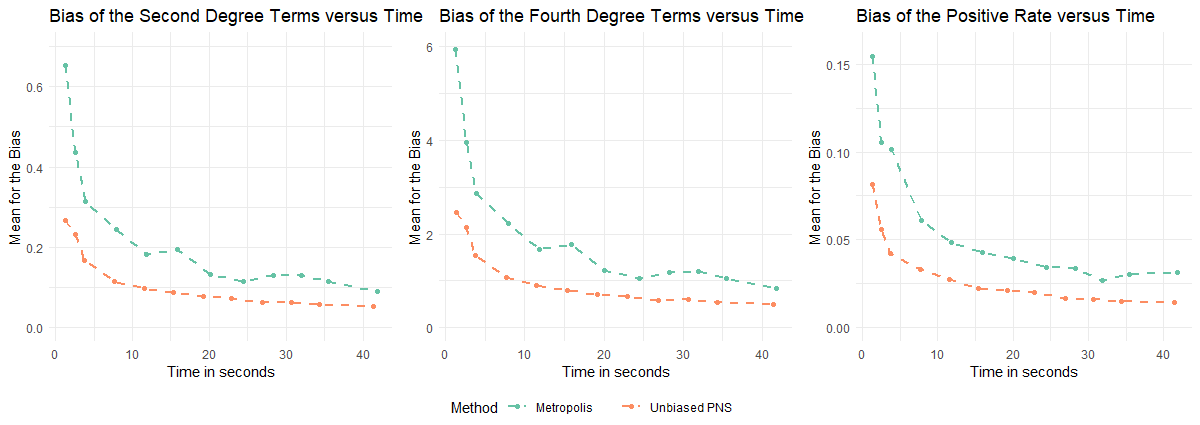}
    \caption{Sum of the average bias from the second degree terms $\text{Bias}(X_1^2) + \text{Bias}(X_2^2)$ (left), the fourth degree terms $\text{Bias}(X_1^4) + \text{Bias}(X_2^4)$ (middle), and the positive rate $\text{Bias}(\mathbbm{1}(X_1 > 0)) + \text{Bias}(\mathbbm{1}(X_1 > 0))$ (right) between sampling and target density $\pi$ as a function of average time in seconds for two methods: Metropolis algorithm and Unbiased PNS. $\mathbb{I}$ means the indicator function. We used Donuts example with $r_0 = 10$ and $\sigma = 0.1$. Each dot within the plot represents the result of the average bias value and time used for 1000 simulation runs given certain original sample sizes. The original sample sizes for the Metropolis algorithm are $\{50000, 100000, 150000, 300000, 450000, \dots, 1500000\}$. The sizes for Unbiased PNS are 20x more than the sizes for the Metropolis. We choose these sizes to get a close average CPU time for both methods. For Unbiased PNS, we used $\lvert \mathcal{N}_k \rvert = 50$ and $L_0 = 1000$.}
    \label{fig-donutresult002}
\end{figure}   

The results for the comparison of the Metropolis algorithm and Unbiased PNS by the sum of the average bias from the second degree terms $\text{Bias}(X_1^2) + \text{Bias}(X_2^2)$, the fourth degree terms $\text{Bias}(X_1^4) + \text{Bias}(X_2^4)$, and the positive rate $\text{Bias}(\mathbb{I}(X_1 > 0)) + \text{Bias}(\mathbb{I}(X_1 > 0))$ are shown in Figure \ref{fig-donutresult002}. From the result for different choices of the terms, we can conclude that Unbiased PNS performs better than the Metropolis algorithm in this continuous Donuts example. 

\section{Burn In by Partial Neighbor Search}
\label{sec-burninpns}

\subsection{Optimization instead of Burn-In}
\label{sec-optimization}
In previous sections, when we need $K$ original samples, we have to generate $2K$ original samples. We use the first $K$ original samples as the burn-in part. This way, we started our sampling process from stationarity. However, \cite{geyer2011introduction} argued that burn-in is not necessary for MCMC. As an alternative to burn-in, any point the researcher does not mind having in a sample is a good starting point. His argument indicates that we can usually start at a point whose target density value is large. \cite{geyer2011introduction} claimed that this alternative method is usually better than regular burn-in.

We can apply optimization algorithms in \cite{chen2022optimization} before sampling if we accept the above statement. For example, we can cancel the burn-in part of the QUBO question in Section \ref{sec-qubo}. Instead, before we start sampling from the target density, we consider optimization algorithms which tries to maximize $\pi(x) = \exp\{x^T Q x\}$ for $x \in \{0, 1\}^N$. Then we can start sampling from a state with a large $\pi(x)$ value, although it may not be optimal. Simulated Annealing is one such algorithm. In addition, we can also use Optimization Rejection-Free and Optimization PNS from \cite{chen2022optimization}.

To find $x$ from the state space $\mathcal{S}$ which maximizes $\pi(x)$, given the proposal distribution $\mathcal{Q}$, and the corresponding neighbors $\mathcal{N}$, and a non-increasing cooling schedule $T: \mathbb{N} \to (0, \infty)$, the corresponding algorithms for Simulated Annealing, Optimization Rejection-Free, and Optimization PNS are described in Algorithm \ref{alg-sa}, \ref{alg-optrf}, and \ref{alg-optpns}.

\begin{algorithm}
\caption{Simulated Annealing}\label{alg-sa}
\begin{algorithmic}
\State initialize $X_0$, and $X_{\max} = X_0$
\For{$k$ in $1$ to $K$}
    \State random $Y \in \mathcal{N}(X_{k-1})$ based on $\mathcal{Q}(X_{k-1}, \cdot)$
    \State random $U_k \sim \text{Uniform}(0, 1)$
    \If{$U_k < \Large[\frac{\pi(Y)}{\pi(X_{k-1})}\Large]^{{1 / T(k)}}$} 
        \State \Comment{accept with probability $\min \Big{\{} 1, \big{[}\frac{\pi(Y_k)}{\pi(X_{k-1})}\big{]}^{{1 / T(k)}} \Big{\}} $}
        \State $X_{k} = Y$ \Comment{accept and move to state $Y$}
        \If{$\pi(Y) > \pi(X_{\max})$}
            \State $X_{\max} = Y$
        \EndIf
    \Else \State $X_{k} = X_{k-1}$ \Comment{reject and stay at $X_{k-1}$}
    \EndIf 
\EndFor
\end{algorithmic}
\end{algorithm}

\begin{algorithm}
\caption{Optimization Rejection-Free}\label{alg-optrf}
\begin{algorithmic}
\State initialize $J_0$, and set $X_{\max} = J_0$
\For{$k$ in $1$ to $K$}
    \State choose the next jump chain State $J_{k} \in \mathcal{N}(J_{k-1})$ such that $$\hat{P}(J_{k} = y \mid J_{k-1}) \propto \mathcal{Q}(J_{k-1}, y)\min\Bigg\{1, \frac{\pi(y) \mathcal{Q}(y, J_{k-1})}{\pi(J_{k-1}) \mathcal{Q}(J_{k-1}, y)}\Bigg\}$$
    \If{$\pi(J_k) > \pi(X_{\max})$}
        \State $X_{\max} = J_k$
    \EndIf
\EndFor
\end{algorithmic}
\end{algorithm}

\begin{algorithm}
\caption{Optimization Partial Neighbor Search}\label{alg-optpns}
\begin{algorithmic}
\State initialize $J_0$, and set $X_{\max} = J_0$
\For{$k$ in $1$ to $K$}
    \State pick the Partial Neighbor Set $\mathcal{N}_k(J_{k-1}) \subset \mathcal{N}(J_{k-1})$
    \State choose the next jump chain State $J_{k} \in \mathcal{N}_k(J_{k-1})$ such that $$\hat{P}(J_{k} = y \mid J_{k-1}) \propto \mathcal{Q}(J_{k-1}, y)\min\Bigg\{1, \frac{\pi(y) \mathcal{Q}(y, J_{k-1})}{\pi(J_{k-1}) \mathcal{Q}(y, J_{k-1})}\Bigg\}$$
    \If{$\pi(J_k) > \pi(X_{\max})$}
        \State $X_{\max} = J_k$
    \EndIf
\EndFor
\end{algorithmic}
\end{algorithm}

In \cite{chen2022optimization}, we illustrated the superior performance of Optimization PNS with many examples, such as the QUBO question, the Knapsack problem, and the 3R3XOR problem. In all these problems, Optimization PNS is the best algorithm compared to the Simulated Annealing algorithm and Optimization Rejection-Free. See \cite{chen2022optimization} for more details. Therefore, we can use Optimization PNS as in Algorithm \ref{alg-optpns} to replace the burn-in part before sampling. 

However, the starting states obtained by the proposed three optimization algorithms will not converge to the target density. Therefore, people can use these optimization methods to replace the burn-in part only if they believe that the sampling can start without stationarity, just like \cite{geyer2011introduction}.

\subsection{Burn-In until Convergence}
\label{sec-convergence}

Section \ref{sec-optimization}, we mentioned that some people believe that MCMC does not necessarily need to start from stationarity. However, some people may insist on starting from stationarity. Then we can combine the algorithm for optimization and sampling and try to take advantage of both versions to get a burn-in algorithm. We can apply the optimization algorithm for a certain number of steps $K_0$, and then we apply the sampling algorithms such as Rejection-Free (Algorithm \ref{alg-rf}) or Unbiased PNS (Algorithm \ref{alg-unbiasedpns}) for $K_1$ samples.

To check the distribution of the states after a certain number of steps of the hybrid algorithm, We generate a certain number of Markov chains by the algorithms and record each chain's last state. As a result, we can get the distribution after burn-in, and we call this distribution the starting distribution for sampling. For example, just like the previous example in Section \ref{sec-qubo}, we still consider a $16\times16$ QUBO question. Every setting is exactly the same as what we have in Section~\ref{sec-qubo} except we used $Q_{i, j} \sim \text{Normal}(0, 1^2)\mbox{, } \forall i \le j$. We didn't use the standard deviation of $10$ like Section \ref{sec-qubo}, since we only use the last states from one Markov chain, we generate one such starting distribution with $100,000$ Markov chains and check the TVD value between the starting distribution and the target density. Thus, if we use a standard deviation of $10$, we need much a much longer time to get a small TVD value. The number of steps for Optimization PNS $K_0$ is chosen to be $\lfloor \frac{1}{20}K_1 \rfloor$ ($\lfloor$  $\rfloor$ represents the floor function). The number of samples $K_1 = 20, 40, 60, \dots, 200$. In addition, we also compare Unbiased PNS with Optimization PNS plus Unbiased PNS. Since we believe Unbiased PNS will converge slower than Rejection-Free, so we choose $K_1 = 40, 80, \dots, 600$, and $K_0 = \lfloor \frac{1}{40}K_1 \rfloor$. The temperature function $T(k)$ in the optimization algorithms is set to be constantly $1$. The result is shown in Figure \ref{fig-convergence}.

\begin{figure}
    \centering
    \includegraphics[width= \linewidth]{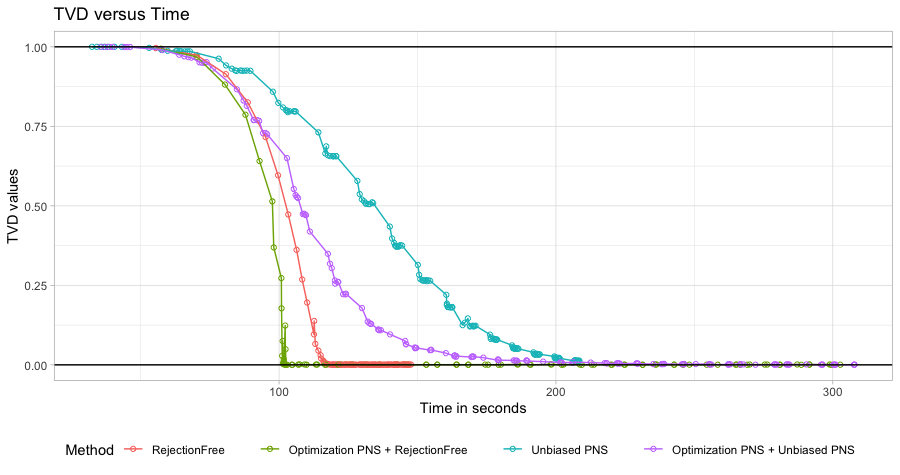}
    \caption{Average values of TVD between the starting distribution from $100,000$ chains and target density $\pi$ as a function of the average time for the chains in seconds for four methods: Rejection-Free, Optimization PNS plus Rejection-Free, Unbiased PNS, and  Optimization PNS plus Unbiased PNS. Random upper triangular $16 \times 16$ QUBO matrix is generated randomly by $Q_{i, j} \sim N(0, 1^2)$ for upper triangular elements. The original sample sizes for Rejection Free are $K_1 = \{20, 30, 40, 50, \dots, 1000\}$, and the number of steps for the corresponding Optimization PNS is $K_0 =  \lfloor \frac{K_1}{20}\rfloor$. The original sample sizes for Unbiased PNS are $K_1 = \{40, 50, 60 , \dots, 1500\}$, and the number of steps the corresponding Optimization PNS is $K_0 = \lfloor \frac{K_1}{40}\rfloor$. Each dot within the plot represents the TVD value between the target distribution $\pi$ and the distribution of the last state of $100,000$ Markov chains.}
    \label{fig-convergence}
\end{figure}

From Figure \ref{fig-convergence}, algorithms with the help of Optimization PNS converge faster with respect to the CPU time. We used a $16 \times 16$ QUBO question here. In addition, we concluded that the higher the dimension is and the more sharply peaked the distribution is, the better the optimization PNS will be \citep{chen2022optimization}. Optimization PNS performs extremely well in the optimization version of $200 \times 200$ QUBO question \citep{chen2022optimization}. Thus, we can also use the Optimization PNS to help burn-in in high dimension or sharply peaked distributions. Since the calculation of TVD for high dimension problems is infeasible, so we just did a simulation of $16 \times 16$ QUBO question here. See \cite{chen2022optimization} for more simulation results from optimization problems with higher dimensions. 

\section{Conclusion}
\label{sec-conclusion}

We introduced three versions of the Partial Neighbor Search algorithms of sampling. Basic PNS is straightforward but does not converge to the target density. The Unbiased PNS will converge to the target density, but it performs worse than Rejection-Free compared to a single-core implementation in the QUBO question. However, the Unbiased PNS can use specialized parallelism hardware such as DA to improve the sampling efficiency significantly, while Rejection-Free cannot. In addition, Rejection-Free is infeasible in many continuous cases, but the Unbiased PNS can be applied to all continuous cases and works much better than the Metropolis algorithm. Finally, we illustrated that Optimization PNS from \cite{chen2022optimization} can be used to improve the burn-in part before sampling. 
   
\section*{Acknowledgments}

The authors thank Fujitsu Ltd. and Fujitsu Consulting (Canada) Inc. for providing financial support.

\bigskip

\bibliography{References}

\appendix
\section{Unbiased PNS Convergence Theorem}
\label{appendixA}

\begin{definition}
\label{def-1}
For sampling questions in general state space, we usually have the following elements:
\begin{enumerate}
    \item a state space $\mathcal{S}$;
    \item a $\sigma$-finite reference measure $\mu$ on $\mathcal{S}$, where $\mu$ could be a counting measure for discrete cases, and $\mu$ could be a Lebesgue measure for continuous cases;
    \item a target density  $\pi: \mathcal{S} \to [0, 1]$, where $\int_{x \in \mathcal{S}} \pi(x) \mu(d x) = 1$;
    \item a target distribution $\Pi: \mathbf{P}(\mathcal{S}) \to [0, 1]$, where $\Pi(\mathcal{A}) \coloneqq \int_\mathcal{A} \pi(x) \mu(d x)$, $\forall \mathcal{A} \subset \mathcal{S}$, and $\mathbf{P}$ means the power set;
    \item a proposal density $q(x, y): \mathcal{S} \times \mathcal{S} \to [0, 1]$, where $\int_{\mathcal{S}} q(x, y) \mu(d y) = 1$, $\forall x, y \in \mathcal{S}$;
    \item a proposal distribution $\mathcal{Q}(x, d y) \propto q(x, y) \mu(d y)$;
    \item a corresponding neighbor set $\mathcal{N}(x) \coloneqq \{y \in \mathcal{S} \mid q(x, y) > 0\} \subset \mathcal{S} \backslash \{x\}$;
    \item the transition probabilities $P(x, d y) = q(x, y) \min \Big(1, \frac{\pi(y) q(y, x)}{\pi(x) q(x, y)}\Big) \mu(d y)$, where $P(x, d y) = q(x, y)\mu(d y)$ if the denominator $\pi(x) q(x, y) = 0$.
\end{enumerate}
Given the above elements, assume irreducibility and aperiodicity, we can generate a Markov chain $\{X_0, X_1, \dots, X_K\}$ such that the limiting distribution of $\lim_{n \to \infty} X_n$ converges to the stationarity distribution $\pi(x)\mu(d x)$ by Algorithm \ref{alg-metropolis}.
\end{definition}

\begin{definition}
\label{def-2}
Suppose we have a state space $\mathcal{S}$, a reference measure $\mu$, and a target density $\pi$, the proposal distribution $\mathcal{Q}$ and the corresponding neighbor set $\mathcal{N}$. Then, a Partial Neighbor Set $\mathcal{N}_i$ means a function $\mathcal{N}_i$ satisfying the following conditions:
\begin{enumerate}
    \item $\mathcal{N}_i: \mathcal{S} \to \mathbf{P}(\mathcal{S})$, where $\mathcal{S}$ is the state space, and $\mathbf{P}(\mathcal{S})$ is the power set of $\mathcal{S}$;
    \item $\mathcal{N}_i(x) \subset \mathcal{N}(x)$, $\forall x \in \mathcal{S}$, and we must pick a finite subset $\mathcal{N}_i(x)$ to ensure a finite for loop in Algorithm \ref{alg-unbiasedpns};
    \item $y \in \mathcal{N}_i(x)  \iff x \in \mathcal{N}_i(y)$, $\forall x, y \in \mathcal{S}$;
\end{enumerate}
\end{definition}
Given a Partial Neighbor Set $\mathcal{N}_i$, the proposal distribution for $\mathcal{N}_i$ is defined to be $\mathcal{Q}_i: \mathcal{S} \times \mathbf{P}(\mathcal{S}) \to \mathbb{R}$, where $\mathcal{Q}_i(x, d y) =\frac{\sum_{r \in \mathcal{N}_i}q(x, r)\delta_r(d y)}{\sum_{z \in \mathcal{N}_i}q(x, z)}$, where $\delta_r$ means the point mass at $r$.

Here, before we prove the convergence theorem of the Unbiased PNS as stated in Algorithm \ref{alg-unbiasedpns}, we first prove it for another version of the Unbiased PNS as stated in Algorithm \ref{alg-unbiasedpns-naive}. It is easy to see that the only difference between Algorithm \ref{alg-unbiasedpns-naive} and Algorithm \ref{alg-unbiasedpns} is that we are not using the Rejection-Free technique here, where we calculate all the transition probabilities at once, pick the next jump chain state, and calculate the multiplicity list according to the transition probabilities.

\begin{algorithm}
\caption{Unbiased Partial Neighbor Search without Rejection-Free technique}\label{alg-unbiasedpns-naive}
\begin{algorithmic}
\State select $\mathcal{N}_i$ for $i = 0, 1, \dots, \mathcal{I}-1$ where $\cup_{i=0}^{\mathcal{I}-1} \mathcal{N}_i(X)= \mathcal{N}(X)$
\State initialize $i = 0$ \Comment{start with neighbor set $\mathcal{N}_0$}
\State initialize $L = L_0$ \Comment{start with $L_0$ remaining samples}
\State initialize $X_0$\Comment{initial the starting state}
\For{$k$ in $1$ to $K$}
\State random $Y \in \mathcal{N}_i(J_{k-1})$ based on $\mathcal{Q}_i(X_{k-1}, \cdot)$
\State random $U_k \sim \text{Uniform}(0, 1)$
    \If{$U_k < \frac{\pi(Y) \mathcal{Q}_i(Y, X_{k-1})}{\pi(X_{k-1}) \mathcal{Q}_i(X_{k-1}, Y)}$}     
    \State \Comment{accept with probability $\min \Big{\{} 1,  \frac{\pi(Y) \mathcal{Q}_i(Y, X_{k-1})}{\pi(X_{k-1}) \mathcal{Q}_i(X_{k-1}, Y)} \Big{\}} $}
    \State $X_{k} = Y$ \Comment{accept and move to state $Y$}
    \Else \State $X_{k} = X_{k-1}$ \Comment{reject and stay at $X_{k-1}$}
    \EndIf
    \State L = L - 1 \Comment{a new sample from $\mathcal{N}_i$}
    \If{$L = 0$} \Comment{if we don't have enough remaining samples}
        \State $L = L_0$, and $i = {i + 1 \mod \mathcal{I}}$ \Comment{switch to the next $\mathcal{N}_i$}
    \EndIf
\EndFor
\end{algorithmic}
\end{algorithm}

\begin{proposition}
\label{prop-1}
Suppose we have a state space $\mathcal{S}$, a reference measure $\mu$, and a target density $\pi$, the proposal distribution $\mathcal{Q}$ and the corresponding neighbor set $\mathcal{N}$. In addition, suppose the Partial Neighbor Set $\{\mathcal{N}_i\}_{i=0}^{\mathcal{I}-1}$ satisfies all the conditions in Definition \ref{def-1}. Then $\pi(x)\mu(d x)$ is the stationary distribution for Algorithm \ref{alg-unbiasedpns-naive} with the partial neighbor set $\mathcal{N}_i$.
\end{proposition}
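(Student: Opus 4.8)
The plan is to observe that Algorithm~\ref{alg-unbiasedpns-naive}, during any stretch of iterations on which the active index $i$ is held fixed, is exactly the Metropolis--Hastings algorithm of Algorithm~\ref{alg-metropolis} run with the proposal density $q_i(x,y)=q(x,y)/c_i(x)$, where $c_i(x)\coloneqq\sum_{z\in\mathcal{N}_i(x)}q(x,z)$ is the normalising constant built into $\mathcal{Q}_i(x,dy)$. The counters $(i,L)$ evolve deterministically and never influence the acceptance decision, so the whole statement reduces to the single claim that the Metropolis--Hastings kernel $P_i$ associated with the proposal $q_i$ and the Hastings ratio $\alpha_i(x,y)=\min\!\bigl(1,\tfrac{\pi(y)\,q_i(y,x)}{\pi(x)\,q_i(x,y)}\bigr)$ has $\pi(x)\mu(dx)$ for a stationary distribution. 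Granting that, stationarity along the full run follows by induction on $k$: if $X_{k-1}\sim\pi$ then $X_k\sim\pi$ regardless of which $P_i$ is in force at step $k$, since the choice of $i$ depends only on $k$.

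I would first check that $q_i$ is a legitimate Hastings proposal, namely $q_i(x,y)>0\iff q_i(y,x)>0$. Here $q_i(x,y)>0$ holds precisely when $y\in\mathcal{N}_i(x)$ and $q(x,y)>0$; condition~3 of Definition~\ref{def-2} ($y\in\mathcal{N}_i(x)\iff x\in\mathcal{N}_i(y)$) together with the support symmetry $q(x,y)>0\iff q(y,x)>0$ from Definition~\ref{def-1} turns this into $x\in\mathcal{N}_i(y)$ and $q(y,x)>0$, i.e. $q_i(y,x)>0$. If some $\mathcal{N}_i(x)=\emptyset$ the proposal at $x$ is vacuous and that step is the identity kernel, which preserves any distribution, so this case is harmless.

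Next I would run the textbook reversibility computation. Writing the off-diagonal part of $P_i$ on $\{y\neq x\}$ as $q_i(x,y)\,\alpha_i(x,y)\,\nu_x(dy)$ with $\nu_x$ the natural reference measure on $\mathcal{N}_i(x)$ (counting measure in the discrete case), the identity $a\min(1,b/a)=\min(a,b)$ gives
\[
\pi(x)\,q_i(x,y)\,\alpha_i(x,y)=\min\!\bigl(\pi(x)\,q_i(x,y),\,\pi(y)\,q_i(y,x)\bigr)=\pi(y)\,q_i(y,x)\,\alpha_i(y,x),
\]
so $\pi(x)\mu(dx)\,P_i(x,dy)$ is symmetric in $(x,y)$ off the diagonal; the leftover holding mass $r_i(x)=1-\int q_i(x,z)\alpha_i(x,z)\nu_x(dz)$ lives on the diagonal and is trivially balanced. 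Integrating this detailed-balance identity in $x$ against $\pi(x)\mu(dx)$, restricting $y$ to an arbitrary $\mathcal{A}\subset\mathcal{S}$, and using $\int_{\mathcal{S}}P_i(y,dx)=1$ yields $\int_{\mathcal{S}}P_i(x,\mathcal{A})\,\pi(x)\mu(dx)=\Pi(\mathcal{A})$, which is the asserted stationarity of $\pi(x)\mu(dx)$ for $P_i$.

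I do not anticipate a real obstacle: the mathematical content is the classical Metropolis--Hastings reversibility argument, and the one genuinely new point is simply that the symmetry requirement on partial neighbour sets (Definition~\ref{def-2}, condition~3) is exactly what keeps $q_i$ a valid Hastings proposal. The places that need care are the support-symmetry bookkeeping above, writing the general-state-space kernel so that the diagonal holding mass is cleanly separated and does not interfere with the off-diagonal symmetry, and being explicit that ``stationary for Algorithm~\ref{alg-unbiasedpns-naive} with $\mathcal{N}_i$'' is to be read as ``stationary for the fixed-$i$ kernel $P_i$'' --- the building block that the full convergence theorem for the alternating scheme will later assemble using the irreducibility coming from $\cup_i\mathcal{N}_i=\mathcal{N}$.
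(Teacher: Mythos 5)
Your proposal is correct and follows essentially the same route as the paper: the heart of both arguments is the detailed-balance identity $\pi(x)q_i(x,y)\min\bigl(1,\tfrac{\pi(y)q_i(y,x)}{\pi(x)q_i(x,y)}\bigr)=\min\bigl(\pi(x)q_i(x,y),\pi(y)q_i(y,x)\bigr)$, which is symmetric in $(x,y)$, whence reversibility and stationarity of $\pi(x)\mu(dx)$ for each fixed-$i$ kernel $P_i$. Your additional bookkeeping --- verifying via condition 3 of the partial-neighbor-set definition that $q_i$ has symmetric support, carrying the normalising constants $c_i(x)$ explicitly, and separating off the diagonal holding mass before integrating --- is all consistent with (and somewhat more careful than) the paper's one-line computation.
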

\begin{proof}
Let $P_i(x, d y)$ be the transition probability for Partial Neighbor Set $\mathcal{N}_i$. Then $\forall y \in \mathcal{N}_i(x)$ where $q(x, y) > 0$, we have

\begin{equation}
\begin{aligned}
    \pi(x) \mu(d x) P_i(x, d y) 
    & = \pi(x) \mu(d x) q(x, d y) \min \bigg(1, \frac{\pi(y) q(y, x) \mu(d x)}{\pi(x) q(x, d y) \mu(d y)}\bigg) \\ 
    & = \min \Big(\pi(x) \mu(d x) q(x, d y), \pi(y) q(y, x) \mu(d x)\Big) \\
    & = \pi(y) \mu(d y) P_i(y, d x) \\
\end{aligned}
\end{equation}
Thus, by reversibility, $\mathcal{N}_i$ is stationary with $\pi(x)\mu(d x)$.
\end{proof}

\begin{proposition}
\label{prop-2}
Suppose we have a state space $\mathcal{S}$, a reference measure $\mu$, a target density $\pi$, and a Markov chain $\{X_0, X_1, X_2, \dots\}$ produced by algorithm \ref{alg-unbiasedpns-naive}. In addition, suppose $\pi(x)\mu(d x)$ is the stationary distribution  is the stationary distribution for Algorithm \ref{alg-unbiasedpns-naive} with all $\{\mathcal{N}_i\}_{i=0}^{\mathcal{I}-1}$, and $\union_{i=0}^{\mathcal{I}-1} \mathcal{N}_i$ makes the Markov chain irreducible. Moreover, suppose there are rejections for the Markov chain, and thus the Markov chain is aperiodic. Then the Markov chain converges in total variation distance; i.e.:
\begin{equation}
    \lim_{k \to \infty} \sup_{\mathcal{A} \subset \mathcal{S}} \Big\lvert P(X_k \in \mathcal{A}) - \int_\mathcal{A} \pi(y) \mu(d y) \Big \rvert = 0
\end{equation}

\end{proposition}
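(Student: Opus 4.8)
The plan is to reduce the convergence of the time-inhomogeneous chain $\{X_k\}$ produced by Algorithm~\ref{alg-unbiasedpns-naive} to the convergence of a single time-homogeneous ``cycle chain''. Put $T = \mathcal{I} L_0$ and let $P_i$ denote the Metropolis kernel built from the proposal distribution $\mathcal{Q}_i$. Reading off the deterministic scan in Algorithm~\ref{alg-unbiasedpns-naive}, the kernel applied at step $k$ is $P_0$ for the first $L_0$ steps, then $P_1$ for the next $L_0$ steps, and so on cyclically, so the sequence of kernels is periodic in $k$ with period $T$ and one full cycle composes to $Q := P_0^{L_0} P_1^{L_0} \cdots P_{\mathcal{I}-1}^{L_0}$. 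Consequently, writing $k = nT + s$ with $0 \le s < T$ and letting $\nu_0$ be the law of $X_0$, the law of $X_k$ is $\nu_0\, Q^{n} R_s$, where $R_s$ is the ordered composition of the first $s$ kernels of one cycle, a product of finitely many of the $P_i$'s.

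Then I would invoke two elementary facts. First, by Proposition~\ref{prop-1} every $P_i$ has $\Pi$ as a stationary distribution, hence so do the products $Q$ and $R_s$; in particular $\Pi R_s = \Pi$. Second, every Markov kernel $R$ is a contraction for total variation, $\sup_{\mathcal{A}}|\lambda_1 R(\mathcal{A}) - \lambda_2 R(\mathcal{A})| \le \sup_{\mathcal{A}}|\lambda_1(\mathcal{A}) - \lambda_2(\mathcal{A})|$. Combining these,
\begin{equation}
\begin{aligned}
\sup_{\mathcal{A}\subset\mathcal{S}}\big|P(X_k\in\mathcal{A}) - \Pi(\mathcal{A})\big|
&= \sup_{\mathcal{A}}\big|(\nu_0 Q^n R_s)(\mathcal{A}) - (\Pi R_s)(\mathcal{A})\big| \\
&\le \sup_{\mathcal{A}}\big|(\nu_0 Q^n)(\mathcal{A}) - \Pi(\mathcal{A})\big|,
\end{aligned}
\end{equation}
so it suffices to show that the homogeneous chain with kernel $Q$ converges to $\Pi$ in total variation.

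For that final step I would appeal to the standard Markov chain convergence theorem (see e.g.\ \citep{meyn2012markov}), for which it remains to check that $Q$ is irreducible, aperiodic, and has the probability measure $\Pi$ as a stationary distribution; invariance was already noted, and aperiodicity follows from the standing assumption that rejections occur, so that self-transitions carry positive probability and $Q$ cannot be periodic. The real content is irreducibility of $Q$: since $\bigcup_{i=0}^{\mathcal{I}-1}\mathcal{N}_i = \mathcal{N}$ makes the combined chain irreducible, from any state $x$ there is a finite sequence of single Metropolis moves, each through one of the $\mathcal{N}_i$, reaching a prescribed target set of positive $\Pi$-measure with positive probability; I would realize such a path inside $Q^{n}$, with $n$ the number of moves, by executing exactly one productive move per cycle -- inside the $P_i^{L_0}$ block whose index matches that move -- and rejecting, i.e.\ staying in place, on every other step of the cycle, which has positive probability since rejections are available.

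I expect this irreducibility transfer, together with the general-state-space bookkeeping, to be the main obstacle. Making ``realize the path inside one cycle'' rigorous requires that the states used for waiting have positive rejection probability under the relevant proposal distributions, which is guaranteed by the assumption that rejections occur; and the convergence theorem quoted above requires $\psi$-irreducibility and, in the non-discrete case, Harris-type recurrence, which is automatic on a countable state space and is assumed to hold in the regular continuous settings considered here. The remaining ingredients -- composition of kernels, the stationarity bookkeeping, and the total-variation contraction -- are routine.
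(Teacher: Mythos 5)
Your proof is correct in outline and is considerably more careful than the paper's, which disposes of the proposition in one line by asserting that it ``follows immediately from Theorem 13.0.1 in \cite{meyn2012markov}.'' That theorem is stated for time-homogeneous chains, whereas the chain produced by Algorithm~\ref{alg-unbiasedpns-naive} is time-inhomogeneous (it cycles deterministically through the kernels $P_0,\dots,P_{\mathcal{I}-1}$), so your reduction to the cycle kernel $Q = P_0^{L_0}\cdots P_{\mathcal{I}-1}^{L_0}$ --- together with the observations that the partial products $R_s$ preserve $\Pi$ (by Proposition~\ref{prop-1}) and contract total variation --- is exactly the bridge the paper leaves implicit. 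The two genuine points of care are the ones you flag yourself: (i) irreducibility of the composed kernel $Q$ must be deduced from irreducibility of the combined neighbor structure, and your ``one productive move per cycle, reject everywhere else'' construction needs positive holding probability at the intermediate states under the relevant $P_i$, which is a slightly stronger reading of ``there are rejections'' than what is needed for aperiodicity alone; and (ii) the cited convergence theorem yields total-variation convergence only from $\pi$-a.e.\ starting point unless one also verifies Harris recurrence. Neither caveat is addressed in the paper's own proof, so your argument is, if anything, the more honest version of the intended one.
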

\begin{proof}
This follows immediately from Theorem 13.0.1 in \cite{meyn2012markov}.
\end{proof}

\begin{theorem}
\label{thm-3}
Suppose we have a state space $\mathcal{S}$, a reference measure $\mu$, a target density $\pi$, a Markov chain $\{X_0, X_1, X_2, \dots\}$ produced by algorithm \ref{alg-unbiasedpns-naive}, and a jump chain $\{(J_0, M_0), (J_1, M_1), (J_2, M_2), \dots\}$ produced by algorithm \ref{alg-unbiasedpns}. Meanwhile, suppose the proposal distribution $\mathcal{Q}$ and the corresponding neighbor set $\mathcal{N}$ ensure the Markov chain produced by the Metropolis-Hastings algorithm converges to the stationarity $\pi(x)\mu(d x)$. In addition, suppose $\pi(x)\mu(d x)$ is the stationary distribution for all $\{\mathcal{N}_i\}_{i=0}^{\mathcal{I}-1}$, and $\union_{i=0}^{\mathcal{I}-1} \mathcal{N}_i$ makes both chains irreducible. Moreover, suppose both chains are aperiodic. Then the jump chain has the following properties:
\begin{enumerate}
    \item the transition probability $P_i$ from the Markov chain and the transition probability $\hat{P}_i$ from the jump chain satisfy $\hat{P}_i(x, d y) = \frac{1}{\alpha(x)} P(x, d y) \mathbbm{1}(x \ne y)$, and $\hat{P}_i(x, \{x\}) = 0$;
    \item  The conditional distribution of $M_k$ given $J_k$ is equal to the distribution of $1+G$ where $G$ is a geometric random variable with success probability $\alpha(J_k)$ where $\alpha(x) \coloneqq 1 - P_i(x, \{x\})$;
    \item If the original chain is $\phi$-irreducible (see, e.g., \cite{meyn2012markov}) for some positive $\sigma$-finite measure $\phi$ on $\mathcal{S}$, then the jump chain is also $\phi$-irreducible for the same $\phi$.
    \item  If the Markov chain has stationary distribution $\pi(x) \mu(d x)$, then the jump chain has stationary distribution given by $\hat{\pi}(x) = c \alpha(x) \pi(x) \mu(d x)$ where $c^{-1} = \int \alpha(y) \pi(y) \mu(d y)$
    \item If $h : \mathcal{S} \to \mathbb{R}$ has finite expectation, then with probability $1$,
    $$\lim_{K \to \infty} \frac{\sum_{k=1}^K M_k h(J_k)}{\sum_{k=1}^K M_k} = \lim_{K \to \infty} \frac{\sum_{k=1}^K [\frac{h(J_k)}{\alpha(J_k)}]}{\sum_{k=1}^K [\frac{1}{\alpha(J_k)}]} = \pi(h) := \int h(x) \pi(x) \mu(d x)$$
\end{enumerate}
\end{theorem}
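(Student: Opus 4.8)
The strategy is to derive everything from two facts already in hand: the behaviour of a single Metropolis--Hastings kernel $P_i$ restricted to a fixed partial neighbor set $\mathcal{N}_i$ (Propositions~\ref{prop-1} and~\ref{prop-2}, together with Definition~\ref{def-2}), and the general-state-space jump-chain identities of \cite{rosenthal2021jump} (the version of Equation~\ref{equa-1} valid on $\mathcal{S}$ with a $\sigma$-finite reference measure). The bridge between them is a single lemma: Algorithm~\ref{alg-unbiasedpns} and Algorithm~\ref{alg-unbiasedpns-naive} produce the \emph{same} distribution on original-chain trajectories, so every convergence statement established for the naive chain transfers verbatim to the jump chain, and items~1--5 become bookkeeping on top of \cite{rosenthal2021jump}.

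First I would prove the equivalence lemma. Given $\{(J_k,M_k)\}$ from Algorithm~\ref{alg-unbiasedpns}, reconstruct $\{X_n\}$ by writing each $J_k$ down $M_k$ times in order; the claim is that $\{X_n\}$ has the law of the chain produced by Algorithm~\ref{alg-unbiasedpns-naive}. Inside one block of $L_0$ original samples with fixed $\mathcal{N}_i$, the naive chain holds at its current state $x$ for $1+\mathrm{Geom}(\alpha_i(x))$ steps, where $\alpha_i(x)=1-P_i(x,\{x\})$, and then moves with law $P_i(x,\cdot)$ conditioned on leaving $x$; this is exactly the ``$m\le L$'' branch, since the success probability $p$ there equals $\alpha_i(x)$ and the $\hat P_i$ there is $P_i$ conditioned on a move. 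At a block boundary the ``$m>L$'' branch caps the stay at the leftover budget $L$, keeps $J_k=J_{k-1}$, and advances to $\mathcal{N}_{i+1}$; by memorylessness of the geometric distribution and the Markov property, a stay capped under $\mathcal{N}_i$ followed by a fresh stay under $\mathcal{N}_{i+1}$ from the same state $x$ reproduces precisely the naive trajectory across the switch, and the counts recorded by $M$ sum correctly. Assembling the blocks gives the lemma. I expect the careful handling of this boundary truncation, and checking that alternating through $\{\mathcal{N}_i\}$ leaves irreducibility and aperiodicity intact so that Propositions~\ref{prop-1} and~\ref{prop-2} remain applicable, to be the main obstacle.

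Items~1--4 are then local statements about one fixed $\mathcal{N}_i$. For items~1 and~2 I would re-run the short argument behind Equation~\ref{equa-1}: the naive $P_i$-chain started at $x$ stays with probability $P_i(x,\{x\})=1-\alpha(x)$ at each step independently, so the first new state has law $\hat P_i(x,dy)=\alpha(x)^{-1}P_i(x,dy)\,\mathbbm{1}(x\ne y)$ with $\hat P_i(x,\{x\})=0$, and the waiting time until the move is $1+G$ with $G\sim\mathrm{Geom}(\alpha(x))$; this is Theorem~13(1)--(2) of \cite{rosenthal2021jump}, and one checks that $\alpha(x)$ equals the $p$ in Algorithm~\ref{alg-unbiasedpns}. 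Item~3 is Theorem~13(3) of \cite{rosenthal2021jump}: deleting immediate repetitions cannot destroy reachability, so the jump chain inherits $\phi$-irreducibility from the original chain. For item~4, put $\hat\pi_i(dx)=c_i\,\alpha(x)\pi(x)\,\mu(dx)$ with $c_i^{-1}=\int\alpha(y)\pi(y)\,\mu(dy)$; then $\int_x\hat\pi_i(dx)\,\hat P_i(x,dy)=c_i\int_x\pi(x)\mu(dx)\,P_i(x,dy)\,\mathbbm{1}(x\ne y)=c_i\big(\pi(y)-\pi(y)P_i(y,\{y\})\big)\mu(dy)=\hat\pi_i(dy)$, which uses only the $P_i$-stationarity of $\pi\,\mu$ from Proposition~\ref{prop-1}.

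Finally, item~5. By the equivalence lemma, $\frac{\sum_{k=1}^K M_k h(J_k)}{\sum_{k=1}^K M_k}=\frac{1}{N_K}\sum_{n=1}^{N_K}h(X_n)$ with $N_K=\sum_{k=1}^K M_k\ge K\to\infty$ and $\{X_n\}$ the naive chain; by Proposition~\ref{prop-2} and the strong law of large numbers for Markov chains, applied to the time-homogeneous $\mathcal{I}L_0$-step skeleton whose invariant law is $\pi\,\mu$ by Proposition~\ref{prop-1}, the right-hand side converges to $\pi(h)$ almost surely. For the equality with the middle expression, condition on $\{J_k\}$: at interior steps $M_k$ is $1+\mathrm{Geom}(\alpha(J_k))$ with mean $\alpha(J_k)^{-1}$ (item~2) and the $M_k$ are conditionally independent, so a conditional strong law gives $\sum_{k=1}^K M_k h(J_k)=\sum_{k=1}^K \alpha(J_k)^{-1}h(J_k)+o\!\big(\sum_{k=1}^K \alpha(J_k)^{-1}\big)$ a.s.; the boundary steps, of which there is one per completed block, each contribute a single term bounded by $L_0$ and are asymptotically negligible, exactly as in Theorem~13(5) of \cite{rosenthal2021jump}. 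Dividing through recovers the middle ratio and completes the proof.
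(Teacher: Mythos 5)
Your proposal is correct and follows essentially the same route as the paper, whose entire proof is the one-line observation that the result follows from Proposition~\ref{prop-2} together with Theorem~13 of \cite{rosenthal2021jump}. You simply supply the details the paper leaves implicit --- most usefully the equivalence lemma identifying the trajectories of Algorithm~\ref{alg-unbiasedpns} and Algorithm~\ref{alg-unbiasedpns-naive} via memorylessness at the block boundaries --- so no substantive difference in approach arises.
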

\begin{proof}
The proof is trivial given the Proposition \ref{prop-2} and Theorem 13 from the Rejection-Free paper \citep{rosenthal2021jump}. We reviewed Theorem 13 from the Rejection-Free paper in Section \ref{sec-continuous}.
\end{proof}

\section{How to Sample Proportionally}
\label{appendixC}

Given $A_i > 0$, for $i = 1, 2, \dots, N$, how can we sample $Z$ so that $P(Z = i) = \frac{A_i}{\sum_j A_j}$? We could choose $U \sim \mbox{Uniform}[0, 1]$, and then set $Z = \min\{i, \sum_{j=1}^i A_j > U \times \sum_{j=1}^N A_j\}$. However, this involves summing all of the $A_j$ , which is inefficient. If $\sum_{j=1}^N A_j = 1$, then we could choose $U \sim \mbox{Uniform}[0, 1]$ and just set $Z = \min\{i, \sum_{j=1}^i A_j > U\}$, which is slightly easier, and can be done by binary searching. However, it still requires summing lots of the $A_j$, which could still be inefficient. If $\sum_{j=1}^N A_j < 1$, then we could choose $U \sim \mbox{Uniform}[0, 1]$, and then still set$Z = \max\{i, \sum_{j=1}^i A_j > U\}$, except if no such $i$ exists then we reject that choice of $U$ and start again. In addition to the previous problems, this could involve lots of rejection if $\sum_{j=1}^N A_j$ is much smaller than 1, which is again inefficient. Another option is the following method, based on \cite{efraimidis2006weighted}; see also the n-fold way approach to kinetic Monte Carlo in \cite{bortz1975new}.

\begin{proposition}
\label{prop-5}
Let $A_1, A_2, \dots, A_N$ be positive numbers, Let $\{R_j\}_{j=1}^N$ be i.i.d. $\sim \mbox{Uniform}[0, 1]$, and let $d_j = - \frac{\log(R_j)}{A_j}$ for $j = 1,2,\dots,N$. Finally, set $Z = \arg \min_j d_j$. Then $P[Z = i] = \frac{A_i}{\sum_j A_j}$, i.e. $Z$ selects $i$ from $\{1,2,\dots,N\}$ with probability proportional to $A_i$. 
\end{proposition}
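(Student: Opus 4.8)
The plan is to recognize the quantities $d_j$ as independent exponential random variables with rates $A_j$, and then carry out the standard ``competing exponentials'' computation.

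First I would establish the distributional facts. Since $R_j \sim \mathrm{Uniform}[0,1]$, the probability integral transform gives $-\log(R_j) \sim \mathrm{Exponential}(1)$: indeed $P[-\log R_j > t] = P[R_j < e^{-t}] = e^{-t}$ for all $t \ge 0$. Dividing by $A_j > 0$ then yields $d_j = -\log(R_j)/A_j \sim \mathrm{Exponential}(A_j)$, i.e.\ $P[d_j > t] = e^{-A_j t}$, and the $d_j$ are mutually independent because the $R_j$ are. In particular each $d_j$ has a continuous distribution, so the event that two of them coincide has probability $0$ and $Z = \arg\min_j d_j$ is almost surely well defined.

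Next I would compute $P[Z=i]$ directly by conditioning on the value of $d_i$ and using independence:
\[
P[Z = i] = P\big[d_i < d_j \text{ for all } j \ne i\big] = \int_0^\infty A_i e^{-A_i t} \prod_{j \ne i} P[d_j > t]\, dt = \int_0^\infty A_i e^{-A_i t} \prod_{j \ne i} e^{-A_j t}\, dt .
\]
The integrand collapses to $A_i \exp\big({-}\big(\sum_{j=1}^N A_j\big) t\big)$, and integrating over $t \in [0,\infty)$ gives $A_i / \sum_{j=1}^N A_j$, as claimed; summing over $i$ confirms these probabilities add to $1$, consistent with there being almost surely a unique minimizer.

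There is no genuine obstacle here: the only points needing a word of care are the measure-zero tie event (so that $\arg\min$ is well defined) and the interchange of the integral with the product over $j \ne i$, both of which are routine given independence and the explicit exponential densities.
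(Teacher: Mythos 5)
Your proof is correct. It is, at bottom, the same argument as the paper's --- condition on the $i$-th variable, use independence to turn the joint event into a product, and evaluate an elementary integral --- but you route it through the observation that $d_j\sim\mathrm{Exponential}(A_j)$ and then invoke the standard competing-exponentials computation, integrating $A_i e^{-A_i t}\prod_{j\ne i}e^{-A_j t}$ over $t\in[0,\infty)$. The paper instead conditions directly on $R_i=x$ and integrates $\prod_{j\ne i}x^{A_j/A_i}$ over $x\in[0,1]$; the two integrals are identical under the substitution $x=e^{-A_i t}$. Your framing buys conceptual clarity: it explains \emph{why} the construction works (it is the exponential-clocks trick, which also connects to the n-fold way reference the paper cites), it makes the almost-sure uniqueness of the $\arg\min$ explicit via continuity of the exponential distribution (a point the paper glosses over), and it generalizes immediately. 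The paper's version is marginally more self-contained in that it never needs to name the exponential distribution. There is no gap in your argument.
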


\begin{proof}
\begin{equation}
\begin{aligned}
    P[Z=i] 
    & = P[d_j > d_i, \forall j \ne i] \\
    & = P[- \frac{\log(R_j)}{A_j} > - \frac{\log(R_i)}{A_i}, \forall j \ne i] \\
    & = P[R_j < R_i^{A_j / A_i}, \forall j \ne i] \\
    & = \int_{0}^1 P[R_j < R_i^{A_j / A_i}, \forall j \ne i \mid R_i = x] d x \\
    & = \int_{0}^1 P[R_j < x^{A_j / A_i}, \forall j \ne i] d x \\
    & = \int_{0}^1\prod_{j \ne i} x^{A_j / A_i} d x \\
    & = \int_{0}^1 x^{\sum_{j \ne i} A_j / A_i} d x \\
    & = \frac{x^{[\sum_{j \ne i} A_j / A_i + 1]}}{\sum_{j \ne i} A_j / A_i + 1} \Bigg{\rvert}_{x=0}^1 \\
    & = \frac{A_i}{\sum_j A_j}
\end{aligned}
\end{equation}
\end{proof}

Proposition \ref{prop-5} is useful, especially when we apply Rejection-Free and PNS to parallelism hardware. In addition, even when we apply it to a single core implementation, computing $\arg \min$ is faster than dividing by the sums.

\end{document}